\def\ps@headings{%
\def\@oddhead{\mbox{}\scriptsize\rightmark \hfil \thepage}%
\def\@evenhead{\scriptsize\thepage \hfil \leftmark\mbox{}}%
\def\@oddfoot{}%
\def\@evenfoot{}}
\newcommand{\Rmnum}[1]{\expandafter\@slowromancap\romannumeral #1@}
\newtheorem{theorem}{Theorem}
\newtheorem{definition}{Definition}
\newtheorem{lemma}{Lemma}
\newtheorem{corollary}{Corollary}
\newtheorem{remark}{Remark}
\newcommand{\pp}{\mathbf{p}}
\newcommand{\tpp}{\tilde{\pp}}
\newcommand{\QQ}{\mathbf{Q}}
\newcommand{\uu}{\mathbf{u}}
\newcommand{\tc}{\tilde{c}}
\newcommand{\hc}{\hat{c}}
\newcommand{\cc}{\mathbf{c}}
\newcommand{\hcc}{\mathbf{\hc}}
\newcommand{\tcc}{\mathbf{\tc}}
\newcommand{\pqM}{p_{_{qM}}}
\newcommand{\tpqM}{\tilde{p}_{_{qM}}}
\newcommand{\bdelta}{\bm{\delta}}
\newcommand{\bnu}{\bm{\nu}}
\newcommand{\blambda}{\bm{\lambda}}
\newcommand{\btheta}{\bm{\theta}}
\newcommand{\cM}{\mathcal{M}}
\newcommand{\cN}{\mathcal{N}}
\newcommand{\tM}{\tilde{M}}
\newcommand{\tq}{\tilde{q}}
\newcommand{\hM}{\hat{M}}
\newcommand{\nonnegpart}[1]{\left[#1\right]^+}
\newcommand{\hQ}{\hat{Q}}
\newcommand{\tQ}{\tilde{Q}}
\newcommand{\cC}{\mathcal{C}}
\newcommand{\avgQ}{\bar{Q}}
\newcommand{\avgR}{\bar{R}}
\newcommand{\ha}{\hat{a}}
\newcommand{\tm}{\tilde{m}}
\newcommand{\tn}{\tilde{n}}
\newcommand{\tb}{\tilde{b}}
\newcommand{\avgJ}{\bar{J}}
\newcommand{\qq}{\mathbf{q}}
\newcommand{\bmu}{\bm{\mu}}
\newcommand{\hb}{\hat{b}}
\newcommand{\hp}{\hat{p}}
\newcommand{\hpqM}{\hat{p}_{_{qM}}}
\newcommand{\hpp}{\mathbf{\hp}}
\newcommand{\hQQ}{\mathbf{\hQ}}
\newcommand{\mm}{\mathbf{m}}
\begin{document}
\title{Online Advertisement, Optimization\\and Stochastic Networks}

\author{Bo (Rambo) Tan and R. Srikant\\
Department of Electrical and Computer Engineering\\
University of Illinois at Urbana-Champaign\\ 
Urbana, IL, USA}

\maketitle

\begin{abstract}
In this paper, we propose a stochastic model to describe how search service providers charge client companies based on users' queries for the keywords related to these companies' ads by using certain advertisement assignment strategies. We formulate an optimization problem to maximize the long-term average revenue for the service provider under each client's long-term average budget constraint, and design an online algorithm which captures the stochastic properties of users' queries and click-through behaviors. We solve the optimization problem by making connections to scheduling problems in wireless networks, queueing theory and stochastic networks. Unlike prior models, we do not assume that the number of query arrivals is known. Due to the stochastic nature of the arrival process considered here, either temporary ``free''  service, i.e., service above the specified budget (which we call ``overdraft'') or under-utilization of the budget (which we call ``underdraft'') is unavoidable. We prove that our online algorithm can achieve a revenue that is within $O(\epsilon)$ of the optimal revenue while ensuring that the overdraft or underdraft is $O(1/\epsilon)$, where $\epsilon$ can be arbitrarily small. With a view towards practice, we can show that one can always operate strictly under the budget. In addition, we extend our results to a click-through rate maximization model, and also show how our algorithm can be modified to handle non-stationary query arrival processes and clients with short-term contracts.

Our algorithm also allows us to quantify the effect of errors in click-through rate estimation on the achieved revenue. We show that we lose at most $\frac{\Delta}{1+\Delta}$ fraction of the revenue if $\Delta$ is the relative error in click-through rate estimation.

We also show that in the long run, an expected overdraft level of $\Omega(\log(1/\epsilon))$ is unavoidable (a universal lower bound) under any stationary ad assignment algorithm which achieves a long-term average revenue within $O(\epsilon)$ of the offline optimum.
\end{abstract}

\section{Introduction}
\label{sec: intro}
Providing online advertising services has been the major source of revenue for search service providers such as Google, Yahoo and Microsoft. When an Internet user queries a keyword, alongside the search results, the search engine may also display advertisements from some companies which provide services or goods related to this keyword. These companies pay the search service providers for posting their ads with a specified amount of price for each ad 
on a pay-per-impression or pay-per-click basis. We call them ``clients'' 
in the following text.

Maximizing the revenue obtained from their clients is the key objective of search service providers. Research which targets this objective has followed two major directions. One is based on auction theory, in which the goal is to design mechanisms in favour of the service provider, and much of the research in this direction considers static bids (e.g. \cite{IyeKum06_AdAuction}; see \cite{FelMut08_AlgSearchAuc} for a survey), while dynamic models such the one in \cite{MenAsuSri09_DynAdAuction} are still emerging. The other is from the perspective of online resource allocation without considering the impact of the service provider's mechanisms on the clients' bids, and the main focus of this kind of research is on designing an online algorithm which posts specific ads in response to each search query arriving online, in order to achieve a high competitive ratio
with respect to the offline optimal revenue. Our work follows the second direction.

Our model is as follows:

\noindent\hrulefill\\
\textbf{Online Advertising Model:}

Assume that queries for keyword $q$ arrive to the search engine according to a stochastic process at rate $\nu_q$ queries per time slot, where we have assumed that time is discrete and a {\em``time slot''} is our smallest discrete time unit. In response to each query arrival, the search engine may display ads from some clients on the webpage. There are $L$ different places (e.g., top, bottom, left, right, etc.) on a webpage where ads could be displayed. We will call these places {\em``webpage slots."} When client $i$'s ad is displayed in webpage slot $s$ when keyword $q$ is queried, there is a probability with which the user who is viewing the page (the one who generated the query) will click on the ad. This probability, called the {\em``click-through rate,''} is denoted by $c_{qis}.$

A client specifies the amount of money (``bid'') that it is willing to pay to the search service provider when a user clicks on its ad related to a specific query. We use $r_{qi}$ to denote this per-click payment from client $i$ for its ad related to a query for keyword $q$. Additionally, client $i$ also specifies an average budget $b_i$ which is the maximum amount that it is willing to pay per {\em``budgeting cycle''} on average, where a budgeting cycle equals to $N$ time slots (we have introduced the notion of a budgeting cycle since the time-scale over which queries arrive may be different than the time-scales over which budgets may be settled).

The problem faced by the search service provider is then to assign advertisements to webpage slots, in response to each query, so that its long-term average revenue is maximized.

\noindent\hrulefill\\

Based on the above model, we design an online algorithm which achieves a long-term average revenue within $O(\epsilon)$ of the offline optimal revenue, where $\epsilon$ can be chosen arbitrarily small, indicating the near-optimality of our online algorithm. Before entering into the details, in the next two subsections we will first survey the related literature, highlight the main contributions of our work, and discuss the differences between our model and previous ones.

\subsection{Related Work}
\label{sec: survey}
We will only survey the online resource allocation models here, and not the auction models. The online ads model in prior literature mainly include two types, namely AdWords (AW) and Display Ads (DA), of which the difference lies in the constrained resource of each client. In the AW model, the resource is the client's budget, while in the DA model, the resource is the maximum number of impressions agreed on by the client and the service provider. Correspondingly, after each resource allocation step, the resource of a client whose ad is posted, is reduced by the bid value\footnote{This refers to the pay-per-impression scheme. With a pay-per-click scheme, the reduction only happens if the ad is clicked.} in the AW model, or $1$ impression in the DA model. Both of them belong to a general class of packing linear programs formulated in \cite{FelHenKor10_ads}. Most of the prior online algorithms for solving the AW and DA model respect the hard constraint on the client's resources. One exception is \cite{FelHenKor09_adsFreeDisposal}, where the authors argue that ``free disposal'' of resources makes the DA model more tractable (but not necessary for the AW model). 

Mehta et al. \cite{mehsabume_FOCS05} modeled the online ads problem as a generalization of an online matching problem \cite{KarpVaz_STOC90_matching} on a bipartite graph of queries and clients. Later in \cite{BucJaiJos07_primal-dual}, Buchbinder et al. showed that matching clients to webpage slots (whether it is a single slot or multiple slots) can be solved as a maximum-weighted matching problem. Following \cite{BucJaiJos07_primal-dual}, a number of other online algorithms using the maximum-weighted bipartite matching idea have been proposed in \cite{MahNazSab07_AdEstimate, FelHenKor09_adsFreeDisposal, DevHay09_ads} and \cite{FelHenKor10_ads}.
The algorithms in \cite{KalPruhs00_b-matching} and \cite {mehsabume_FOCS05}, which were earlier than \cite{BucJaiJos07_primal-dual}, can also be regarded as maximum-weighted matching solutions on this bipartite graph of clients and webpage slots.

In \cite{KalPruhs00_b-matching}, the ``b-matching'' problem (related to the online ads context, bids are trivially $0$ or $1$ and budgets are all $b$) is solved by an $1-1/e$ competitive algorithm as $b\rightarrow \infty$ and the weights are the remaining budgets of those clients interested in the newly arrived query (i.e., the bid equals $1$). For the online ads problem in which bids and budgets can have general and different values, \cite{mehsabume_FOCS05} (its longer version is \cite{mehsabume_JACM07}) uses the ``discounted'' bids as the weights corresponding to each client. The discount factor is calculated by a function $\psi(x) = 1-e^{x-1}$, of which the input $x$ is the fraction of a client's budget that has been consumed. Their algorithm is also $1-1/e$ competitive, under an assumption that bids are small compared to budgets. By taking advantage of estimated numbers of query arrivals for each keyword within a given period and modifying the discount factor in \cite{mehsabume_FOCS05}, Mahdian et al. \cite{MahNazSab07_AdEstimate} designed a class of algorithms which achieve a considerably better competitive ratio with accurate estimates while still guarantee a reasonably good competitive ratio with inaccurate estimates, also assuming small bids.

The algorithms in \cite{BucJaiJos07_primal-dual, FelHenKor09_adsFreeDisposal, DevHay09_ads,  FelHenKor10_ads} and \cite{AgrawalWangYe_TR10}, all use a primal-dual framework to compute a maximum-weighted matching at each iteration, in which the dual variables (corresponding to each client) are used to determine the weights. The two $1-1/e$ competitive 
 algorithms in \cite{BucJaiJos07_primal-dual} and \cite{FelHenKor09_adsFreeDisposal} update the dual variables dynamically 
in their primal-dual type algorithms every time a decision is made. Specifically, 
each dual variable in \cite{BucJaiJos07_primal-dual}, which implicitly tracks the fraction of budget that has been spent by the corresponding client, grows during each iteration at a rate parameterized by the fraction of the bid for the incoming query in this client's total budget, while \cite{FelHenKor09_adsFreeDisposal} uses an ``exponentially weighted average'' of the up-to-date $n(i)$ most valuable impressions\footnote{In the DA model in \cite{FelHenKor09_adsFreeDisposal}, $n(i)$ is defined as the maximum number of impressions agreed for client $i$. After allowing free disposal, only the current $n(i)$ most valuable impressions assigned to client $i$ will be considered.} assigned to client $i$ as a new dual variable with respect to this client. On the other hand, the three dual type learning-based algorithms in \cite{DevHay09_ads}, \cite{FelHenKor10_ads} and \cite{AgrawalWangYe_TR10} achieve a competitive ratio of $1-O(\epsilon)$ based on a random-order arrival model (rather than the adversarial model in most of the earlier work), 
assuming small bids and knowledge of the total number of queries. The main difference between them is that \cite{DevHay09_ads} and \cite{FelHenKor10_ads} use an initial $\epsilon$ fraction of queries to learn the optimal dual variables (with respect to this training set), 
while the algorithm in \cite{AgrawalWangYe_TR10} repeats the learning process over geometrically growing intervals. Additionally, the ``small bids'' condition in \cite{AgrawalWangYe_TR10} is slightly weaker than the condition in  \cite{DevHay09_ads} and \cite{FelHenKor10_ads}.

\subsection{Our Contributions and Comparison to Prior Work}
\label{sec: contribution}
As in prior work (especially \cite{BucJaiJos07_primal-dual} and \cite{FelHenKor09_adsFreeDisposal}), our solution relies on a primal-dual framework to solve a maximum-weighted matching problem on a bipartite graph of clients and webpage slots, with dynamically updated dual variables which contribute to the weights on the edges of the bipartite graph. However, unlike prior work, we are able to obtain a revenue which is $O(\epsilon)$ close to the optimal revenue using a purely adaptive algorithm without the need for the knowledge of the number of query arrivals over a time period or the average arrival rates.

Our solution is related to scheduling problems in wireless networks. In particular, we use the optimization decomposition ideas in  \cite{GeoNeeTas06_ResAllocWireless}, the stochastic performance bounds in \cite{LinShrSri06_OptimizeWireless} and the modeling of delay-sensitive flows in \cite{juasri10}. Borrowing from that literature, we introduce the concept of an {\em``overdraft'' queue}. The overdraft queue measures the amount by which the provided service temporarily exceeds the budget specified by a client. In making the connection to wireless networks, we define something called the {\em``per-client revenue region,''} which is related to the concept of capacity region in queueing networks (see \cite{GeoNeeTas06_ResAllocWireless, LinShrSri06_OptimizeWireless}). In our context, it characterizes the revenue extractable from each client as a function of all the clients' budgets.

Our online algorithm exhibits a trade-off between the revenue obtained by the service provider and the level of  overdrafts. We can further modify our online algorithm so that clients can always operate strictly under their budgets. Finally, our algorithm and analysis naturally allow us to assess the impact of click-through rate estimation on the service providers revenue.

We are able to show that our online algorithm achieves an overdraft level of $O(1/\epsilon)$. So a natural question is whether this bound is tight. We show that the overdraft for any algorithm must be $\Omega(\log(1/\epsilon))$. While there is a gap between the upper and lower bounds, together they imply that the overdraft must increase when $\epsilon$ goes to zero. This work is related to \cite{BerGal_ToIT02_EnergyDelayTradeoff, Neely_TAC09_EnergyDelayTradeoff, shah_Allerton08_LB,neely2007optimal} and \cite{huang11_LIFO} in the context of communication networks. See Section \ref{sec: que_len_lb} for a detailed survey.

Besides the revenue maximization model, we also study another online ads model in which the objective is to maximize the average overall click-through rate, subject to a minimum impression requirement for each client. We also show that our results can be naturally extended to handle non-stationary query arrival processes and clients which have short-term contracts with the service provider.
.

Like the algorithm in \cite{AgrawalWangYe_TR10}, our algorithm can also be generalized to a wider class of linear programs within different application contexts, where the coefficients in the objective function and constraints are not necessarily nonnegative.

There are two points of departure in our algorithm compared to existing models: the first one is that we assume a purely stochastic model in which the query arrival rates are unknown. Thus, there is no need to know the number of arrivals in a time period as in prior models, and this is even true for non-stationary query arrival processes. The other is that we assume an average budget rather a fixed budget over a time horizon. This allows us to better model permanent clients (e.g., big companies who do not stop advertising) and who do not provide a fixed time-horizon budget. Clients who advertise for a limited amount of time can also be handled well since the algorithm is naturally adaptive. 

A minor difference with respect to prior models is that our model assumes that time is slotted. This can be easily modified to assume that query arrivals can occur at any time according to some continuous-time stochastic process. The only difference is that our analysis would then involve continuous-time Lyapunov drift instead of the discrete-time drift used in this paper. From a theoretical point of view, our analysis is different from prior work which uses competitive ratios: our model and solution is similar in spirit to stochastic approximation \cite{borkar08stochastic} where gradients (here the gradient of the dual objective) are known only with stochastic perturbations. This point of view is essential to model stochastic traffic with unknown statistics.

Instead of the $1-O(\epsilon)$ competitive ratio in prior work, we show that our algorithm achieves a revenue which is within $O(\epsilon)$ of the optimal revenue. The $O(\epsilon)$ penalty arises due to the stochastic nature of our model. However, we do not require assumptions such as 
knowledge of the total number of queries in a given period  \cite{MahNazSab07_AdEstimate, DevHay09_ads, FelHenKor10_ads,AgrawalWangYe_TR10}, or information of keyword frequencies \cite{MahNazSab07_AdEstimate}.\footnote{It should be mentioned that another common assumption ``small bids'' (or ``large budgets'', ``large offline optimal value'') used in \cite{KalPruhs00_b-matching, mehsabume_FOCS05, MahNazSab07_AdEstimate, FelHenKor09_adsFreeDisposal,DevHay09_ads} and \cite{FelHenKor10_ads} is not essentially different from our ``long-term'' assumption.}

\subsection{Organization of the Paper}
The rest of the paper is organized as follows: In Section~\ref{sec: static_optimize}, we formulate an optimization problem involving long-term averages. In Section \ref{sec: dyn_alg}, we start considering the stochastic version of our model and propose an online algorithm, which also introduces the concept of ``overdraft queue.'' Performance analysis of this online algorithm, which includes the near-optimality of the long-term revenue and an upper bound on the overdraft level, will also be done in Section \ref{sec: dyn_alg}. The last two subsections of Section~\ref{sec: dyn_alg} 
present two extensions, namely the decisions based on estimated click-through rates and the ``underdraft'' mechanism. In Section \ref{sec: que_len_lb},  we derive a universal lower bound on the expected overdraft level under any stationary algorithms for online advertising. 
The second online ads model ``click-through rate maximization problem'' with its related extensions, algorithm design and analysis is given in Section \ref{sec: click-through}. 
Section~\ref{sec: conclusion} concludes the whole paper.

Compared to an earlier version of this paper which appeared in \cite{tansri_adwords_CDC11}, we give a more detailed literature survey in Subsection \ref{sec: survey}, all the proofs for the lemmas, theorems and corollaries in Section \ref{sec: dyn_alg} (we only stated these results without proofs in \cite{tansri_adwords_CDC11} due to page limits), and full discussions on the underdraft mechanism in Subsection \ref{sec: negative_queue}. Sections \ref{sec: que_len_lb} and \ref{sec: click-through} are completely new.

\section{An Optimization Problem Involving Long-Term Averages}
\label{sec: static_optimize}
Based on the model described in Section \ref{sec: intro}, we first pose the revenue maximization problem as an optimization problem involving long-term averages. For this purpose, we define an assignment of clients to webpage slots as a matrix $M$ of which the $(i,s)^{\rm th}$ element is defined as follows:
$$M_{is}=\left\{
\begin{array}{ll}
1,&\mbox{if client } i \mbox{ is assigned to webpage slot } s\\
0,&else.
\end{array}
\right.
$$
The matrix $M$ has to satisfy some practical constraints. First, a webpage slot can be assigned to only one client and vise versa.
Furthermore, the assignment of clients to certain webpage slots may be prohibited for certain queries. For example, it may not make sense to advertise chocolates when someone is searching for information about treatments for diabetes. These constraints can be abstracted as follows:
For the queried keyword $q,$ the set of assignment matrices have to belong to some set $\cM_q.$ We also let $\pqM$ be the probability of choosing matrix $M$ when the queried keyword is $q.$

The optimization problem is then given by
\begin{equation}\label{eq: objective}
\max_{\pp} \avgR(\pp) = \sum_q\nu_q  \sum_{M\in \cM_q}\pqM \sum_{i,s} M_{is}c_{qis}r_{qi}
\end{equation}
subject to
\begin{eqnarray}
N \sum_q\nu_q  \sum_{M\in \cM_q}\pqM \sum_{s} M_{is}c_{qis}r_{qi}\leq b_i,\qquad \forall i;\label{eq: budget}\\
0\leq \pqM \leq 1,\qquad \forall q,~M \in \cM_q;\label{eq: prob}\\
\sum_{M\in \cM_q} \pqM \leq 1, \qquad \forall q.\label{eq: prob_sum}
\end{eqnarray}
In the above formulation, the objective (\ref{eq: objective}) is the average revenue per time slot and constraint (\ref{eq: budget}) expresses the fact that the average payment over a budgeting cycle should not exceed the average budget. The optimization is a linear program and if all the problem parameters are known, in principle, it can be solved offline, returning probabilities $\{\pqM\}$ which can be used by a service provider to maximize its revenue. However, such an offline solution is not desirable for at least two reasons:
\begin{itemize*}
\item Being a static approach, it does not use any feedback about the current state of the system. For example, the fact that the empirical average payment of a client has severely exceeded its average budget would have no impact on the subsequent assignment strategy. Since the formulation and hence, the solution, only cares about long-term budget constraint satisfaction, severe overdraft or underdraft of the budget can occur over long periods of time.

\item The offline solution is a function of the query arrival rates $\{\nu_q\}.$ Thus, a change in the arrival rates would require a recomputation of the solution.
\end{itemize*}

In view of these limitations of the offline solution, we propose an online solution which adaptively assigns client advertisements to webpage slots to maximize the revenue. As we will see, the online solution does use feedback about the overdraft (or underdraft) level in future decisions, and does not require knowledge of $\{\nu_q\}$.

\section{Online Algorithm and Performance Analysis}
\label{sec: dyn_alg}
\subsection{A Dual Gradient Descent Solution}
\label{sec: iterative_sol}
To get some insight into a possible adaptive solution to the problem, we first perform a dual decomposition which suggests a gradient solution. However, a direct gradient solution will not take into the account the stochastic nature of the problem and will also require knowledge of the query arrival rates $\{\nu_q\}.$ We will address these issues in the following subsections, using techniques that, to the best of our knowledge, have not been used in prior literature on the online advertising problem.

We append the constraint (\ref{eq: budget}) to the objective (\ref{eq: objective}) using Lagrange multipliers $\delta_i \geq 0$ to obtain a partial Lagrangian function
\begin{eqnarray*}
L(\pp,\bdelta)\!\!\!\!&=&\!\!\!\!\sum_q\nu_q  \sum_{M\in \cM_q} \pqM \sum_{i,s} M_{is}c_{qis}r_{qi} -\sum_i \delta_i \cdot 
\left(\sum_q\nu_q  \sum_{M\in \cM_q}\pqM\sum_{s} M_{is}c_{qis}r_i-\frac{b_i}{N}\right) \nonumber\\
&=&\!\!\!\!\sum_q\nu_q \!\! \sum_{M\in \cM_q} \!\!\!\!\pqM \!\!\sum_{i,s} M_{is}c_{qis}r_{qi}(1-\delta_i)+\!\!\sum_i \frac{\delta_i b_i}{N},\label{eq: Lagrangian}
\end{eqnarray*}
subject to constraints (\ref{eq: prob}) and (\ref{eq: prob_sum}). The dual function is
\begin{equation*}
D(\bdelta) 
= \max_{\pp} \sum_q\nu_q \!\!\!\! \sum_{M\in \cM_q}\!\!\pqM\!\! \sum_{i,s} M_{is}c_{qis}r_{qi}(1 - \delta_i) +  \sum_{i}\frac{\delta_i b_i}{N}, \label{eq: dual}
\end{equation*}
subject to constraints (\ref{eq: prob}) and (\ref{eq: prob_sum}). Note that the maximization part in the dual function can be decomposed into independent maximization problems with regard to each queried keyword $q$, i.e., for all $q$,
\begin{eqnarray*}
&&\max_{\{\pqM,~M\in \cM_q\}} \sum_{M\in\cM_q} \pqM \sum_{i,s}M_{is}c_{qis}r_{qi}(1-\delta_i)
= \max_{M\in\cM_q} \sum_{i,s} M_{is}c_{qis}r_{qi}(1-\delta_i), \label{eq: max_each_query}
\end{eqnarray*}
where it is easy to see that each maximization is solved by a deterministic solution. This suggests the following primal-dual algorithm to iteratively solve the original optimization problem (\ref{eq: objective}): at step $k$,
\begin{eqnarray*}
\forall q,&&\hM^*(q,k) \in \arg \max_{M\in \cM_q} \sum_{i,s} M_{is} c_{qis}r_{qi}(1-\delta_i(k)); \label{eq: primal_dual_max}\\
\forall i,&&\delta_i(k+1) = \Bigg[\delta_i(k)+\epsilon\bigg(
  N \sum_q\nu_q \sum_{s} [\hM^*(q,k)]_{is}\cdot 
  c_{qis}r_{qi} - b_i\bigg)\Bigg]^+, \label{eq: primal_dual_update_Lagrangian}
\end{eqnarray*}
where $\epsilon>0$ is a fixed step-size parameter, and $[x]^+ = x~\mbox{if}~x\ge 0$ or $[x]^+ = 0$ otherwise. Furthermore, defining $\hQ_i (k)\triangleq \delta_i(k) / \epsilon$, the above iterative algorithm becomes
\begin{eqnarray*}
\forall q,&&\hM^*(q,k) \in \arg \max_{M\in \cM_q} \sum_{i,s} M_{is}c_{qis}r_{qi}\left(\frac{1}{\epsilon}-\hQ_i(k)\right); \label{eq: primal_dual_max2}\\
\forall i,&&\hQ_i(k+1) = \nonnegpart{\hQ_i(k)+ \hat{\lambda}_i(k)
     - b_i}, \label{eq: primal_dual_update_avg_queue}
\end{eqnarray*}
where
\begin{equation}
\hat{\lambda}_i(k) \triangleq N \sum_q\nu_q \sum_{s} [\hM^*(q,k)]_{is}c_{qis}r_{qi}. \label{eq: def_lambda_hat}
\end{equation}
Note that $\hQ_i(k)$ can be interpreted as a queue which has $\hat{\lambda}_i(k)$ arrivals and $b_i$ departures at step $k$. Although this algorithm already uses the feedback provided by $\{\mathbf{\hQ}(k)\}$ (or $\{\bm{\delta}(k)\}$) about the state of the system, it is still using a priori information about the arrival rates of queries in $\{\bm{\hat{\lambda}}(k)\}$, hence not really ``online.'' However, it motivates us to incorporate a queueing system with stochastic arrivals into the real online algorithm, which will be described in the next subsection.

\subsection{Stochastic Model, Online Algorithm, and ``Overdraft Queue''}
\label{sec: stochastic}
In practice, a search service provider may not have a priori information about the query arrival rates $\{\nu_q\}$, and generally, query arrivals during each time slot are stochastic rather than constant. Let time slots be indexed by $t \in \mathcal{Z}^+ \cup \{0\}$. We specify our detailed statistical assumptions as follows:
\begin{itemize}
\item Query arrivals: Assume that a time slot is short enough so that query arrivals in each time slot can be modeled as a Bernoulli random variable with occurrence probability $\nu$. The probability that an arrived query is for keyword $q$ is assumed to be $\vartheta_q$ and $\sum_q \vartheta_q = 1$. 
Let $\tq(t)$ represent the index of the keyword queried in time slot $t$, such that $\tq(t) = q$ w.p. $\nu_q = \nu \vartheta_q$ for all $q$ (indexed by positive integers) 
and $\tq(t) = 0$ w.p. $1-\nu$, which accounts for the case that no query arrives.

\item Budget spending: We limit the values of budget spent in each budgeting cycle to be integers. To match the average budget $b_i$ (when it is not an integer), the budget of client $i$ in budgeting cycle $k$ is assumed to be a random variable $\tb(k)$ which equals $\lceil b_i \rceil$ w.p. $\varrho_i$ and $\lfloor b_i \rfloor$ otherwise, such that
$E[\tb(k)] = \varrho_i \lceil b_i \rceil + (1-\varrho_i) \lfloor b_i \rfloor = b_i$, i.e., $\varrho_i = \frac{b_i- \lfloor b_i \rfloor}{\lceil b_i \rceil- \lfloor b_i \rfloor} = b_i - \lfloor b_i \rfloor.$ For the trivial case that $b_i$ is already an integer, we let $\varrho_i = 1$.

\item Click-through behaviors: In time slot $t$, after a query for keyword $q$ arrives, if the ad of client $i$ is posted on webpage slot $s$ in response to this query, then whether this ad will be clicked is modeled as a Bernoulli random variable  $\tc_{qis}(t)$ with occurrence probability $c_{qis}$.
\end{itemize}

We now want to implement the above iterative algorithm online based on this stochastic model. According to  definition~(\ref{eq: def_lambda_hat}), $\hat{\lambda}_i$ includes average query arrivals and click-through choices within $N$ time slots (i.e., one budgeting cycle). Thus,  each iteration step in the online algorithm should correspond to a budgeting cycle. For convenience, we define
$$
\uu(k) \triangleq \{\tq(t),\tcc(t)~\mbox{for}~kN \le t \le kN+N-1\}
$$
as a collection of random variables describing user behaviors (including stochastic query arrivals and click-through choices) in budgeting cycle $k$.
The online algorithm is then described as follows:

\noindent\hrulefill\\
\textbf{Online Algorithm:} (in each budgeting cycle $k\ge 0$)
\\

In each time slot $t \in [kN, kN+N-1]$, if $\tq(t) > 0$, choose the assignment matrix
\begin{eqnarray}
&&\tM^*(t,\tq(t),\QQ(k)) 
\in \arg \max_{M\in \cM_{\tq(t)}} \sum_{i,s} M_{is} c_{\tq(t) is}r_{\tq(t) i}\left(\frac{1}{\epsilon}-Q_i(k)\right). \label{eq: dyn_alg_choose_mat}
\end{eqnarray}

At the end of budgeting cycle $k$, for each client $i$, update
\begin{equation}
Q_i(k+1) = \nonnegpart{Q_i(k)+ A_i(k,\QQ(k),\uu(k)) - \tb_i(k)}, \label{eq: dyn_alg_que_update}
\end{equation}
where
\begin{eqnarray}
\!\!\!\!\!\!&&A_i(k,\QQ(k),\uu(k)) 
\triangleq \sum_{t = kN}^{kN+N-1} \!\!\!\! \sum_{s} [\tM^*(t, \tq(t), \QQ(k))]_{is}\cdot \tc_{\tq(t)is}(t) \cdot r_{\tq(t)i}. \label{eq: virtual_arr}
\end{eqnarray}


\noindent\hrulefill\\
Here, $A_i(k,\QQ(k),\uu(k))$ represents the revenue obtained by the service provider from client $i$ during budgeting cycle $k$, and recall that $\tilde{b}_i(k)$ is a random variable which takes integer values whose mean is equal to the average budget per budgeting cycle.

In this algorithm, client $i$ is associated with a virtual queue $Q_i$ (maintained at the search service provider). During budgeting cycle $k$, the amount of money client $i$ is charged by the search service provider $A_i(k,\QQ(k),\uu(k))$ is the arrival to this queue, and the average budget per budgeting cycle $b_i$ is the departure from this queue. Note that if this queue is positive, it means that the total value of the real service already provided to the client has temporarily exceeded the client's budget, i.e., ``free'' service has been provided temporarily.
Hence, we call this queue the ``overdraft queue.''

There are two different time scales here. The faster one is a time slot, the smallest time unit used to capture user behaviors (including stochastic query arrivals and click-through choices) and execute ad-posting strategies. The slower one is a budgeting cycle (equal to $N$ time slots), at the end of which the overdraft queues are updated based on the revenue obtained over the whole budgeting cycle.

We make the following assumptions on the above stochastic model:
$\{\tq(t)\}$ are i.i.d. across time slots $t$;
$\{\tc_{qis}(t)\}$ are independent across $q$, $i$, $s$, and $t$; each variable in $\{\tq(t)\}$ and each variable in $\{\tc_{qis}(t)\}$ are mutually independent. In fact, the model can be generalized to allow for query arrivals correlated over time and across keywords, and other similar correlations inside the click-through choices or between these two stochastic processes.
Such models would only make the stochastic analysis more cumbersome, but the main results will continue to hold under these more general models.

In order to guarantee that the Markov chain which we will define later is both irreducible and aperiodic, we further assume that the probability of whether there is an arrival in a time slot $\nu \in (0,1)$. 
We also assume that $r_{qi}$ for all $q$ and $i$ can only take integer values. Together with the fact that $\tb(k)$ takes integer values, $\{\QQ(k)\}$ becomes a discrete-time integer-valued queue. Note that assuming integer values 
is only for ease of analysis, but not necessary.

\subsection{An Upper Bound on the Overdraft}
\label{sec: que_len_ub}
%
%

According to the ad assignment step (\ref{eq: dyn_alg_choose_mat}), if at the beginning of budgeting cycle $k$, $Q_i(k) > 1/\epsilon$, then for this budgeting cycle, the $i^{\rm th}$ row of $\tM^*(t,q,\QQ(k))$ is always a zero vector, i.e., the service provider will not post the ads of client $i$ until $Q_i(k)$ falls below $1/\epsilon$. Since by assumption the number of query arrivals per time slot is upper bounded, for any budgeting cycle $k$, one can bound the transient length of each overdraft queue as below:
$$
Q_i(k)  \le \frac{1}{\epsilon}+ N \cdot \arg\max_{q,s}\{r_{qi} c_{qis}\} - \lfloor b_i \rfloor,~\forall i.
$$
Therefore, $Q_i(k) \sim O(1/\epsilon)$ for all $i$, and stability is not an issue for these ``upper bounded'' queues. It further implies that this online algorithm satisfies the budget constraints in the long run, i.e., for all client $i$,
\begin{equation}
\lim_{K\rightarrow\infty} E\left[\frac{1}{K}\sum_{k=0}^{K-1} A_i(k,\QQ(k),\uu(k))\right] \le b_i \label{eq: long-run spending}
\end{equation}
must hold.

It should be mentioned that in \cite{huang11_LIFO}, through using the LIFO queueing discipline, the authors show an $O((\log(1/\epsilon))^2)$ bound on the averaged waiting time encountered by most of the packets, which is tighter than 
the bound $O(1/\epsilon)$ under the FIFO queueing discipline (see e.g. \cite{GeoNeeTas06_ResAllocWireless}; our above result also fits this bound). While the length of a FIFO queue is proportional to the arrival rate according to Little's law \cite{asmussen03AP_que}, the length of a LIFO queue in \cite{huang11_LIFO} is still $O(1/\epsilon)$, even if it is occupied by very ``old'' packets which only accounts for a negligible fraction $O(\epsilon^{\log(1/\epsilon)})$ of all the packets that have arrived. Unlike in a  communication network where waiting time is usually the main concern and dropping a small fraction of old packets does almost no hurt to many online applications, what clients of online advertising service care about is how much they have paid beyond their budgets, which is measured by the overdraft queue in our model. 


%

\subsection{Near-Optimality of the Online Algorithm}
\label{sec: converge}
We now show that, in the long term, the proposed online algorithm achieves a revenue that is close to the optimal revenue $\avgR(\pp^*)$ (where $\pp^*$ is the solution to the optimization problem \eqref{eq: objective}). 
We start with the following lemma:\\

\begin{lemma}
\label{lemma: bounded_drift2}
Consider the Lyapunov function $V(\QQ) = \frac{1}{2} \sum_{i} Q_i^2$. For any $\epsilon>0$, and each time period $k$,
\begin{eqnarray*}
&&E[V(\QQ(k+1))|\QQ(k)=\QQ] - V(\QQ)
\le - \frac{N}{\epsilon} \left(\bar{R}(\pp^*) - \bar{R}(\tpp^*(k,\QQ)) \right) + B_1
    - B_2\sum_i Q_i.
\end{eqnarray*}
Here,
\begin{eqnarray}
\!\!\!\!\!\!\!\!B_1 \!\!\!\!&\triangleq&\!\!\!\! \frac{1}{2}\Big(
(N(N-1)L^2+N\!L) (\arg\max_{q,i,s}\{c_{qis} r_{qi}\})^2\nonumber\\
&&\!\!\!\!+ \sum_i \lceil b_i \rceil^2 (b_i- \lfloor b_i \rfloor)
+ \lfloor b_i \rfloor^2 (1-b_i+\lfloor b_i \rfloor)
\Big),\label{eq: B1_def}
\end{eqnarray}
where $L$ is the number of webpage slots;
\begin{eqnarray}  
\!\!\!\!\!\!\!\!B_2 \!\!\!\!&\triangleq&\!\!\!\! \min_{i} \{b_i \!-\! N\sum_q \nu_q \!\! \sum_{M\in\cM_q} \pqM^* \sum_{s} M_{is}c_{qis}r_{qi}\}; \label{eq: B2_def}
\end{eqnarray}
and $\tpp^*(k,\QQ) \triangleq \{\tpqM^*(k,\QQ),~\forall q,M\in \cM_q\}$ where $\tpqM^*(k,\QQ)$ equals $1$ if $M=\tM^*(t,q,\QQ)$  for $kN\le t \le kN+N-1$ (i.e., the optimal matrix in the maximization step (\ref{eq: dyn_alg_choose_mat})) and $0$ otherwise.
\hfill $\diamond$
\end{lemma}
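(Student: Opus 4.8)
The plan is a standard quadratic Lyapunov-drift computation in which the assignment rule (\ref{eq: dyn_alg_choose_mat}) --- a maximum-weight matching with edge weights $c_{qis}r_{qi}(1/\epsilon - Q_i)$ --- serves as the bridge between the online policy $\tpp^*(k,\QQ)$ and the optimal static policy $\pp^*$. First I would use the recursion (\ref{eq: dyn_alg_que_update}) and the elementary inequality $([x]^+)^2 \le x^2$; squaring, multiplying by $\frac12$ and summing over $i$ gives
\begin{equation*}
V(\QQ(k+1)) - V(\QQ(k)) \le \frac12\sum_i \big(A_i(k,\QQ(k),\uu(k)) - \tb_i(k)\big)^2 + \sum_i Q_i(k)\big(A_i(k,\QQ(k),\uu(k)) - \tb_i(k)\big).
\end{equation*}
Taking $E[\,\cdot\mid\QQ(k)=\QQ\,]$, the first sum is bounded by the constant $B_1$: since $A_i,\tb_i(k)\ge 0$ we have $(A_i-\tb_i)^2 \le A_i^2 + \tb_i^2$, and $E[A_i^2\mid\QQ]$ is handled by writing $A_i$ as a sum of $N$ per-slot contributions over the $\le L$ webpage slots and separating the $N$ "diagonal" slot-terms from the $N(N-1)$ cross slot-terms (which factor by independence of the click variables across slots), while $E[\tb_i(k)^2] = \lceil b_i\rceil^2(b_i-\lfloor b_i\rfloor) + \lfloor b_i\rfloor^2(1-b_i+\lfloor b_i\rfloor)$ directly from the definition of $\tb_i(k)$; collecting these reproduces (\ref{eq: B1_def}). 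So the proof reduces to controlling $\sum_i Q_i\, E[A_i(k,\QQ,\uu(k)) - \tb_i(k)\mid\QQ]$.

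For the conditional means, note that $\uu(k)$ and $\tb_i(k)$ are independent of $\QQ(k)$ (which is a function of cycles $0,\dots,k-1$), and within cycle $k$ the matrix chosen in (\ref{eq: dyn_alg_choose_mat}) depends on slot $t$ only through the i.i.d.\ keyword $\tq(t)$, the queue vector being frozen over the cycle. Hence averaging (\ref{eq: virtual_arr}) over the $N$ slots gives $E[A_i(k,\QQ,\uu(k))\mid\QQ] = N\sum_q\nu_q\sum_s[\tM^*(t,q,\QQ)]_{is}c_{qis}r_{qi}$, which is the left-hand side of the budget constraint (\ref{eq: budget}) evaluated at $\tpp^*(k,\QQ)$, while $E[\tb_i(k)] = b_i$.

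The crux is the comparison with $\pp^*$. For each keyword $q$, $\tM^*(t,q,\QQ)$ maximizes $\sum_{i,s}M_{is}c_{qis}r_{qi}(1/\epsilon - Q_i)$ over $M\in\cM_q$, and since displaying no ad is always feasible ($\mathbf{0}\in\cM_q$) this maximum is nonnegative; therefore, for the feasible randomized policy $\pp^*$ satisfying (\ref{eq: prob})--(\ref{eq: prob_sum}),
\begin{equation*}
\sum_q\nu_q\sum_{M\in\cM_q}\tpqM^*(k,\QQ)\sum_{i,s}M_{is}c_{qis}r_{qi}\Big(\frac1\epsilon - Q_i\Big)\ \ge\ \sum_q\nu_q\sum_{M\in\cM_q}\pqM^*\sum_{i,s}M_{is}c_{qis}r_{qi}\Big(\frac1\epsilon - Q_i\Big).
\end{equation*}
I would then split each side into its $\frac1\epsilon$-part and its $Q_i$-part: the $\frac1\epsilon$-parts are $\frac1\epsilon\bar{R}(\tpp^*(k,\QQ))$ and $\frac1\epsilon\bar{R}(\pp^*)$ by definition of $\bar R$; the $Q_i$-parts are $\frac1N\sum_i Q_i\,E[A_i(k,\QQ,\uu(k))\mid\QQ]$ and $\frac1N\sum_i Q_i\,\beta_i^*$, where $\beta_i^* := N\sum_q\nu_q\sum_{M\in\cM_q}\pqM^*\sum_s M_{is}c_{qis}r_{qi}\le b_i$ by (\ref{eq: budget}). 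Rearranging and multiplying by $N$ yields
\begin{equation*}
\sum_i Q_i\,E[A_i(k,\QQ,\uu(k))\mid\QQ]\ \le\ \frac N\epsilon\big(\bar{R}(\tpp^*(k,\QQ)) - \bar{R}(\pp^*)\big) + \sum_i Q_i\,\beta_i^*.
\end{equation*}
Subtracting $\sum_i Q_i b_i$, and using $b_i - \beta_i^* \ge B_2 \ge 0$ (definition (\ref{eq: B2_def})) together with $Q_i\ge 0$, gives $\sum_i Q_i\,E[A_i-\tb_i\mid\QQ] \le \frac N\epsilon(\bar{R}(\tpp^*(k,\QQ)) - \bar{R}(\pp^*)) - B_2\sum_i Q_i$. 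Adding the $B_1$ bound from the first paragraph yields the claimed drift inequality.

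I expect the main obstacle to be this last comparison step: one must apply the maximum-weight inequality while correctly absorbing the slack permitted by (\ref{eq: prob_sum}) (this is exactly where nonnegativity of the per-query optimum, i.e.\ feasibility of the empty assignment, is needed), and then split the resulting inequality consistently so that the $\frac1\epsilon$-terms assemble into $\bar R(\cdot)$ and the $Q_i$-terms into precisely $E[A_i\mid\QQ]$ and $\beta_i^*$. The second-moment bookkeeping producing $B_1$ is routine but tedious, requiring the separation of per-slot diagonal contributions from cross-slot contributions over a budgeting cycle.
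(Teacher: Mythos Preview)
Your proposal is correct and follows essentially the same route as the paper's proof: expand the quadratic drift using $([x]^+)^2\le x^2$, bound the second-moment terms by $B_1$, compute $E[A_i\mid\QQ]$ as $\lambda_i(k,\QQ)$, and then use the max-weight optimality of $\tM^*$ to compare the $(1/\epsilon - Q_i)$-weighted sums under $\tpp^*(k,\QQ)$ and $\pp^*$, splitting into the $\bar R$ terms and the $Q_i$ terms exactly as you describe. Your explicit remark that nonnegativity of the per-query optimum (equivalently, feasibility of the empty assignment) is what absorbs the slack in $\sum_M p_{qM}^*\le 1$ is the one point you make more carefully than the paper, which hides it inside the reformulation (\ref{eq: dyn_max_prob_form}) where the maximization is over probability vectors with sum $\le 1$.
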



The proof is given in Appendix \ref{sec: proof_bounded_drift2}.\\


Now we are ready to present one of the major theorems in this paper, indicating that the long-term average revenue achieved by our online algorithm is within $O(\epsilon)$ of the maximum revenue obtained by the offline optimal solution. The proof is given in Appendix~\ref{sec: proof_revenue_converge}.
\\

\begin{theorem}
\label{thm: revenue_converge}
For any $\epsilon >0$,
$$
0 \le \lim_{K\rightarrow \infty} E\left[ \bar{R}(\pp^*) - \frac{1}{KN}\sum_{k=0}^{K-1} R(k) \right] \le \frac{B_1\epsilon}{N}
$$
for some constant $B_1 > 0$ (defined in \eqref{eq: B1_def} in Lemma~\ref{lemma: bounded_drift2}), where $R(k) \triangleq \sum_i A_i(k,\QQ(k),\uu(k)).$ is defined as  
the revenue obtained during budgeting cycle $k$. \hfill $\diamond$
\end{theorem}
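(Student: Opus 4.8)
The plan is to derive the revenue bound by telescoping the Lyapunov drift inequality from Lemma~\ref{lemma: bounded_drift2} over budgeting cycles $k=0,1,\dots,K-1$ and then letting $K\to\infty$. Summing the per-cycle drift bound and using the non-negativity of $V$, the terms $E[V(\QQ(K))]-V(\QQ(0))$ can be dropped on the appropriate side, leaving
\begin{equation*}
\frac{N}{\epsilon}\sum_{k=0}^{K-1}E\!\left[\bar{R}(\pp^*)-\bar{R}(\tpp^*(k,\QQ(k)))\right]\le K B_1 - B_2\sum_{k=0}^{K-1}E\!\left[\sum_i Q_i(k)\right] + V(\QQ(0)).
\end{equation*}
Since $B_2\ge 0$ (it is non-negative because $\pp^*$ is feasible for the budget constraint \eqref{eq: budget}) and $V(\QQ(0))$ is a fixed constant, dividing by $KN/\epsilon$ and taking $K\to\infty$ gives
$\limsup_{K\to\infty}\frac{1}{K}\sum_{k=0}^{K-1}E[\bar{R}(\pp^*)-\bar{R}(\tpp^*(k,\QQ(k)))]\le B_1\epsilon/N$.

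The remaining work is to connect $\bar R(\tpp^*(k,\QQ(k)))$, which is the \emph{expected} per-slot revenue under the matrices actually chosen in cycle $k$ (averaged against the true query and click-through statistics), to $R(k)=\sum_i A_i(k,\QQ(k),\uu(k))$, the \emph{realized} revenue in cycle $k$. I would argue that, conditioned on $\QQ(k)$, the expectation of $A_i(k,\QQ(k),\uu(k))$ over $\uu(k)$ equals $N$ times the per-slot expected revenue extracted from client $i$ under the chosen (state-dependent but, given $\QQ(k)$, deterministic) assignment rule; summing over $i$ and using the definition of $\tpqM^*(k,\QQ)$ from Lemma~\ref{lemma: bounded_drift2} yields $E[R(k)\mid \QQ(k)]=N\,\bar R(\tpp^*(k,\QQ(k)))$. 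Taking expectations over $\QQ(k)$ and substituting into the telescoped inequality converts the bound into a statement about $\bar R(\pp^*)-\frac{1}{KN}\sum_k E[R(k)]$, which is exactly the claimed upper bound $B_1\epsilon/N$.

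For the lower bound (the left inequality, that the limit is $\ge 0$), I would invoke \eqref{eq: long-run spending} from Subsection~\ref{sec: que_len_ub}: the online algorithm satisfies the long-run budget constraints, so the long-run average revenue vector it extracts is (a limit point of) feasible points for the linear program \eqref{eq: objective}--\eqref{eq: prob_sum}, hence its total cannot exceed the optimum $\bar R(\pp^*)$; this gives $\liminf_{K\to\infty}E[\bar R(\pp^*)-\frac{1}{KN}\sum_k R(k)]\ge 0$. One technical subtlety to handle carefully is the interchange of limit and expectation, and the passage from $\limsup$/$\liminf$ to an actual limit: I would note that the overdraft queues are uniformly bounded by $O(1/\epsilon)$ (from Subsection~\ref{sec: que_len_ub}), so $R(k)$ is uniformly bounded, the Cesàro averages are bounded, and under the ergodicity of the underlying Markov chain (irreducible and aperiodic by the $\nu\in(0,1)$ and integrality assumptions) the limit exists. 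The main obstacle I anticipate is making the identity $E[R(k)\mid\QQ(k)]=N\bar R(\tpp^*(k,\QQ(k)))$ fully rigorous given that the assignment matrix in \eqref{eq: dyn_alg_choose_mat} can vary slot-by-slot within a cycle (through the argmax possibly being multi-valued) while $\QQ(k)$ is held fixed — one must check that the tie-breaking does not disturb the expected revenue, which it does not since any maximizer attains the same objective value, and that $\tpp^*(k,\QQ)$ as defined in the lemma is the right object to average against.
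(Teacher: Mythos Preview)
Your proposal is correct and follows essentially the same route as the paper: telescope the drift inequality of Lemma~\ref{lemma: bounded_drift2} over $k=0,\dots,K-1$, drop the nonpositive term $-B_2\sum_k E[\sum_i Q_i(k)]$ (using $B_2\ge 0$ from feasibility of $\pp^*$) and the nonnegative $E[V(\QQ(K))]$, divide by $K$ and let $K\to\infty$, then convert $\bar R(\tpp^*(k,\QQ(k)))$ into $E[R(k)\mid\QQ(k)]/N$ via the tower property; the lower bound is handled exactly as you suggest, by invoking \eqref{eq: long-run spending} and optimality of $\pp^*$. If anything, you are more careful than the paper about the existence of the limit (via boundedness and ergodicity) and about tie-breaking in the $\arg\max$, both of which the paper leaves implicit.
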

\

\begin{remark}
If we choose a very small $\epsilon$, the matching in \eqref{eq: dyn_alg_choose_mat} behaves like a greedy solution until the queue lengths grows comparably large. This indicates a tradeoff between how close to the long-term optimal revenue the algorithm can achieve and the actual convergence time. 

Additionally, supposing that $\{r_{qi}\}$ and $\{b_i\}$ are both measured in another scale with a factor $\alpha$, e.g., using cents instead of dollars ($\alpha = 100$),  and assuming that $\alpha$ is unknown, it can be shown that the $O(\epsilon)$ convergence bound will also be scaled by $\alpha$ if we measure the revenue in the original scale. To change the algorithm into a ``scale-free'' version, $\{r_{qi}\}$ and $\{b_i\}$ should be divided by a common benchmark value, e.g., the largest budget specified by all the initially existing clients. 
Since the benchmark value is also implicitly multiplied by $\alpha$ if measured in another scale, the scaling factor will be canceled in the normalized $\{r_{qi}\}$ and $\{b_i\}$ and no longer affect the convergence bound. \hfill $\diamond$
\end{remark}


\subsection{Impact of Click-Through Rate Estimation}
\label{sec: click-through_estimate}
In our online algorithm, the decision of picking an optimal ad assignment matrix in (\ref{eq: dyn_alg_choose_mat}) in response to each query is based on the true click-through rates $\cc$. In reality, an estimate $\hcc$ based on historical click-through behaviors is used, i.e., in response to each query for keyword $q$, which arrives in time slot $t \in [kN, kN+N-1]$, we choose the assignment matrix
\begin{eqnarray}
&&\tM^*(t,\tq(t),\QQ(k)) 
\in \arg \!\! \max_{M\in \cM_{\tq(t)}}\! \sum_{i,s} M_{is} \hc_{\tq(t) is}r_{\tq(t) i}\!\left(\frac{1}{\epsilon}-Q_i(k)\right)\!. 
\label{eq: dyn_alg_choose_mat_estimate}
\end{eqnarray}
We then have the following corollary in addition to Theorem~\ref{thm: revenue_converge} in Subsection~\ref{sec: converge}:\\

\begin{corollary}
\label{cor: revenue_converge_estimate}
Assume that the estimated click-through rates $\hcc \in [\cc(1-\Delta), \cc(1+\Delta)]$ with some $\Delta \in (0,1)$. Under our online algorithm with estimated click-through rates, $\QQ(k)$ is still positive recurrent. 
Then, for any $\epsilon >0$,
$$
\lim_{K\rightarrow \infty} E\left[\frac{1}{KN}\sum_{k=0}^{K-1} R(k) \right] \ge
\left(\frac{1-\Delta}{1+\Delta}\right)\cdot \bar{R}(\pp^*) - \frac{B_1\epsilon}{N},
$$
for some constant $B_1 > 0$ (defined in equation~(\ref{eq: B1_def}) in Lemma~\ref{lemma: bounded_drift2}).  \hfill $\diamond$
\end{corollary}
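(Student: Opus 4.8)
The plan is to re-run the Lyapunov-drift argument behind Lemma~\ref{lemma: bounded_drift2} and Theorem~\ref{thm: revenue_converge}, tracking the factor lost by matching with the estimates $\hcc$ in \eqref{eq: dyn_alg_choose_mat_estimate} instead of the true rates $\cc$. Positive recurrence is immediate from the bounded-overdraft observation of Subsection~\ref{sec: que_len_ub}: since $\hc_{qis}>0$ whenever $c_{qis}>0$, the weight $\hc_{qis}r_{qi}(1/\epsilon-Q_i(k))$ is negative as soon as $Q_i(k)>1/\epsilon$, so the maximizer in \eqref{eq: dyn_alg_choose_mat_estimate} still leaves client $i$'s row empty, and $Q_i(k)\le 1/\epsilon+N\max_{q,s}\{r_{qi}\}-\lfloor b_i\rfloor$ eventually. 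Thus $\{\QQ(k)\}$ eventually lives in a finite integer box; together with irreducibility and aperiodicity (which hold, as for the true-rate algorithm, thanks to $\nu\in(0,1)$), this makes it a finite-state irreducible aperiodic chain, hence positive recurrent.

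\textbf{Drift set-up.} With $V(\QQ)=\tfrac12\sum_iQ_i^2$, expanding $V(\QQ(k+1))-V(\QQ(k))$ and taking conditional expectations, the second-moment part is bounded by the same constant $B_1$ of \eqref{eq: B1_def} -- that bound only uses that per-cycle arrivals and departures are bounded, so it does not care whether matchings are formed with $\cc$ or $\hcc$ -- and $E[\tb_i(k)]=b_i$. Within a cycle the matrix used on a query for $q$ is the deterministic maximizer $\tM^*(q,\QQ)$ of \eqref{eq: dyn_alg_choose_mat_estimate}, while $E[\tc_{qis}(t)]=c_{qis}$; hence $E[A_i(k,\QQ(k),\uu(k))\mid\QQ(k)=\QQ]=N\sum_q\nu_q\sum_s[\tM^*(q,\QQ)]_{is}c_{qis}r_{qi}$ and $E[R(k)\mid\QQ(k)=\QQ]=N\bar{R}(\tpp^*(k,\QQ))$ -- the realized revenue still features the true rates $\cc$; only the \emph{chosen} matrix has changed. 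Writing $\Phi_q(M)\triangleq\sum_{i,s}M_{is}c_{qis}r_{qi}(1/\epsilon-Q_i)$, the argument then comes down to replacing the automatic optimality inequality $\Phi_q(\tM^*(q,\QQ))\ge\sum_{M\in\cM_q}\pqM^*\Phi_q(M)$ of the error-free analysis by
$$
\Phi_q(\tM^*(q,\QQ))\ \ge\ \frac{1-\Delta}{1+\Delta}\sum_{M\in\cM_q}\pqM^*\,\Phi_q(M).
$$

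\textbf{Key estimate and conclusion.} To prove this, let $\tilde{\Phi}_q(M)\triangleq\sum_{i,s}M_{is}\hc_{qis}r_{qi}(1/\epsilon-Q_i)$, which $\tM^*(q,\QQ)$ maximizes over $\cM_q$. Because $\cM_q$ is closed under deleting an assigned client--slot pair, any maximizer of $\tilde{\Phi}_q$ (and of $\Phi_q$) uses only pairs $(i,s)$ with $c_{qis}(1/\epsilon-Q_i)\ge 0$ -- a strictly negative contribution could be deleted to strictly improve the objective. On such non-negative-weight pairs the two-sided bound $\hc_{qis}\in[(1-\Delta)c_{qis},(1+\Delta)c_{qis}]$ acts termwise, so $\Phi_q(\tM^*)\ge\frac{1}{1+\Delta}\tilde{\Phi}_q(\tM^*)$; and for a true-rate maximizer $M^{\circ}_q$ of $\Phi_q$, $\tilde{\Phi}_q(M^{\circ}_q)\ge(1-\Delta)\Phi_q(M^{\circ}_q)=(1-\Delta)\max_{M\in\cM_q}\Phi_q(M)$. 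Chaining these through $\tilde{\Phi}_q(\tM^*)\ge\tilde{\Phi}_q(M^{\circ}_q)$ and $\max_M\Phi_q(M)\ge\sum_M\pqM^*\Phi_q(M)$ (valid since the empty matrix makes the max non-negative and $\sum_M\pqM^*\le 1$) gives the displayed inequality. Substituting $\Phi_q(M)=\frac{1}{\epsilon}\sum_{i,s}M_{is}c_{qis}r_{qi}-\sum_{i,s}M_{is}c_{qis}r_{qi}Q_i$, using the budget constraint \eqref{eq: budget}, and discarding the non-positive term $-\frac{2\Delta}{1+\Delta}\sum_iQ_ib_i$ that appears, the drift becomes
$$
E[V(\QQ(k+1))\mid\QQ(k)=\QQ]-V(\QQ)\ \le\ B_1-\frac{N}{\epsilon}\Bigl(\tfrac{1-\Delta}{1+\Delta}\bar{R}(\pp^*)-\bar{R}(\tpp^*(k,\QQ))\Bigr),
$$
the analogue of Lemma~\ref{lemma: bounded_drift2} with $\bar{R}(\pp^*)$ scaled by $\tfrac{1-\Delta}{1+\Delta}$. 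Taking full expectations, using $N\,E[\bar{R}(\tpp^*(k,\QQ(k)))]=E[R(k)]$, telescoping over $k=0,\dots,K-1$, bounding $E[V(\QQ(K))]\ge 0$ with $E[V(\QQ(0))]<\infty$, dividing by $NK/\epsilon$ and letting $K\to\infty$ (the limit existing by positive recurrence) yields the claimed lower bound.

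\textbf{Main obstacle.} I expect the delicate step to be exactly the sign issue above: because the queue weights $1/\epsilon-Q_i$ can be negative, the relative error $\Delta$ does not propagate through $\Phi_q$ term by term, so one must first restrict to the non-negative-weight pairs an optimal matching actually uses and -- the subtle point -- compare $\tM^*(q,\QQ)$ to a true-rate maximizer of $\Phi_q$ rather than to the matrices in the support of $\pp^*$, which may legitimately put an ad on a heavily overdrawn client (a negative-weight pair). Getting this comparison right is what produces the clean multiplicative factor $\tfrac{1-\Delta}{1+\Delta}$ with no additive penalty beyond the $B_1\epsilon/N$ already present in Theorem~\ref{thm: revenue_converge}.
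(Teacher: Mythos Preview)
Your approach is essentially the paper's: resume from inequality~\eqref{eq: bound2} in the proof of Lemma~\ref{lemma: bounded_drift2}, sandwich $\hc$ between $(1-\Delta)c$ and $(1+\Delta)c$, invoke the optimality of $\tM^*$ for the estimated-rate objective, and then telescope exactly as in Theorem~\ref{thm: revenue_converge}. The one place you diverge is in passing from the estimated to the true rates on the $\pp^*$-side. The paper applies the termwise bound $\hc_{qis}\ge(1-\Delta)c_{qis}$ directly to the sum $\sum_{M}\pqM^*\,\tilde{\Phi}_q(M)$, asserting ``all the coefficients in this summation are nonnegative''; you instead route through the true-rate maximizer $M^\circ_q$ of $\Phi_q$ (which provably uses only pairs with $1/\epsilon-Q_i\ge0$) and only afterwards compare to $\pp^*$ via $\max_M\Phi_q(M)\ge\sum_M\pqM^*\Phi_q(M)$. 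This extra hop is a genuine gain in rigor: since the matrices in the support of $\pp^*$ are fixed independently of $\QQ$, they may assign a client with $Q_i>1/\epsilon$ (the queue bound of Subsection~\ref{sec: que_len_ub} allows this by a bounded margin), making the corresponding coefficient negative and the paper's direct termwise estimate at step~(c) formally unjustified. Both routes arrive at the same drift inequality---yours after discarding the nonpositive $-\tfrac{2\Delta}{1+\Delta}\sum_iQ_ib_i$, the paper after discarding $-B_2'\sum_iQ_i$---and hence the same final bound.
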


Proving this needs some minor changes to the proof of Lemma \ref{lemma: bounded_drift2} and Theorem \ref{thm: revenue_converge}, which will be shown in Appendix~\ref{sec: proof_revenue_converge_estimate}.

\

\begin{remark}
Corollary~\ref{cor: revenue_converge_estimate} tells us that for small $\epsilon$, the long-term average revenue achieved by our online algorithm with estimated click-through rates will be at least $\left(\frac{1-\Delta}{1+\Delta}\right)$ of the offline optimal revenue.
\hfill $\diamond$
\end{remark}

\subsection{Underdraft: Staying under the Budget}
\label{sec: negative_queue}
In the previous sections, we allowed the provision of temporary free service to clients, which we call overdraft. If this is not desirable for some reason, the algorithm can be modified to have non-positive overdraft. 
We do this by allowing the queue lengths to become negative, but not positive. The practical meaning of negative queue lengths is to allow each client to accumulate a certain volume of ``credits'' if the current budget is under-utilized and use these credits to offset future possible overdrafts. We call this negative queue length \textit{``underdraft.''} 
Corresponding to this mechanism, we modify our online algorithm
as follows: in response to each query for keyword $q$, which arrives in time slot $t \in [kN, kN+N-1]$, choose the assignment matrix
\begin{eqnarray*}
&&\tM^*(t,\tq(t),\QQ(k)) 
\in \arg\!\! \max_{M\in \cM_{\tq(t)}} \!\!\sum_{i,s} M_{is} c_{\tq(t) is}r_{\tq(t) i}\left(\Gamma_i-Q_i(k)\right), \label{eq: dyn_alg_choose_mat2}
\end{eqnarray*}
and at the end of budgeting cycle $k$, for each client $i$, update
\begin{equation*}
Q_i(k+1) = \max\{Q_i(k)+ A_i(k,\QQ(k),\uu(k)) - \tb_i(k),-C_i\}, \label{eq: dyn_alg_que_update2}
\end{equation*}
where $\Gamma_i$ denotes a customized ``throttling threshold'' (not necessarily $1/\epsilon$) and $C_i$ denotes the maximum allowable credit volume for client $i$. Recall that $A_i(k,\QQ(k),\uu(k))$ is defined in equation~(\ref{eq: virtual_arr}).

We can bound each overdraft queue as below:
$$
Q_i(k)  \le \Gamma_i + N \cdot \arg\max_{q,s}\{r_{qi} c_{qis}\} - \lfloor b_i \rfloor,~\forall i,k.
$$
Thus, if our objective is to eliminate overdrafts (i.e., $Q_i(k) \le 0$ for all $k$), we can set
\begin{equation}
\Gamma_i := \left[\lfloor b_i \rfloor-N \cdot \arg\max_{q,s}\{r_{qi} c_{qis}\} \right]^-,~\forall i, \label{eq: def_Ui}
\end{equation}
where in contrary to $[x]^+$, $[x]^-$ takes the non-positive part of $x$, i.e., $[x]^- = x$ if $x \le 0$ or $[x]^- =0$ otherwise. We further let
$$C_i := \frac{1}{\epsilon}-\Gamma_i,~\forall i,$$
so that after converting $Q_i(k)$ to be nonnegative by using $\tQ_i(k) = Q_i(k) + C_i$ for all $i$, everything is transformed back to the original online algorithm except that each $Q_i(k)$ is replaced by $\tQ_i(k)$, hence we can still show that the revenue achieved by this modified version of online algorithm is within $O(\epsilon)$ of the optimal revenue.

It might seem counter-intuitive that by letting $\epsilon$ go to zero, we can incur potentially large underdrafts (under-utilization of the budget) and yet are able to achieve maximum revenue. This is not a contradiction: for each fixed $\epsilon$,  in the long term, the average service provided to each client is close to the average budget. The $O(1/\epsilon)$ is a fixed amount by which the total budget up to any time $T$ is under-utilized, and, after divided by $T$, it goes to zero when $T$ approaches infinity.

\begin{figure}[t]
\centering
\includegraphics[width=9.0cm]{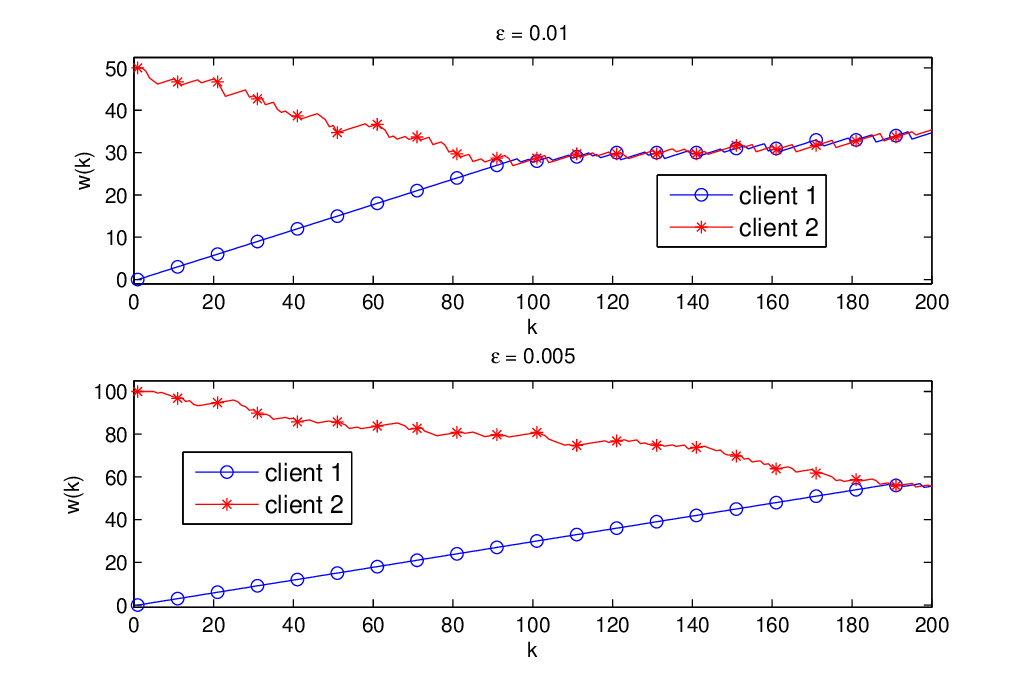}
\caption{Temporary unfairness in service} \label{fig: unfair}
\end{figure}

We note that while an underdraft does not seem to significantly hurt either the client, who actually benefits from an underdraft, or the service provider, whose long-run average revenue is still diminished only by $O(\epsilon)$, large values of the underdraft may result in temporary unfairness in the system.\footnote{Note that this temporary unfairness is not an artifact of the underdraft mechanism. In fact, it occurs once a sample path enters a state where some clients have huge differences from others in their corresponding queue lengths, which can also happen under the original algorithm. We are just using the underdraft scheme to illustrate this phenomenon.}
If, for example, a client accumulates a large underdraft compared to the other clients, then it may receive priority over other clients for large periods of time. To illustrate this, we consider an example with two clients and one queried keyword.  Assume that $\Gamma_i < 0$ for $i=1,2$, and at time slot $k_0$, $Q_1(k_0) = \Gamma_1$ and $Q_2(k_0) = -C_2$ (this occurs with a positive probability due to the ergodicity of the Markov chain $\{\QQ(k)\}$ proved before). We simulate the sample paths of the weights in the maximization step (\ref{eq: dyn_alg_choose_mat2}) with the following setting: budgets $b_1 = b_2 = 0.6$, click-through rates $c_1 = c_2 = 0.5$, revenue-per-click $r_1= r_2 =1$; the number of query arrivals per time slot equals $2$ w.p. $0.5$ and $0$ otherwise; a budgeting cycle equals to one time slot ($N=1$) for simplicity. The results for both $\epsilon = 0.01$ and $\epsilon = 0.005$ ($k = 0$ corresponds to $k_0$ here) are shown in Figure~\ref{fig: unfair}. Client $2$ keeps getting services until the weights of both clients reaches the same level, and the smaller $\epsilon$ is, the longer the ``unfair serving'' period lasts.

It should be mentioned that this underdraft idea can be used under any upper-bounded query arrival model, not restricted in the Bernoulli arrival model considered in this paper.


\section{A Universal Lower Bound on the Expected Overdraft Level} 
\label{sec: que_len_lb}
We want to show that in the long run, an expected overdraft level of $\Omega(\log(1/\epsilon))$ is unavoidable under any stationary ad assignment algorithm which achieves a long-term average revenue within $O(\epsilon)$ of the offline optimum, when the queue length is only allowed to be nonnegative. An ad assignment algorithm $\varpi$ is defined as a strategy which uses matrix $M^{\varpi}(t,q)\in \cM_q$ for ad assignment when a query for keyword $q$ arrives at each time slot $t$. During each budgeting cycle $k$, the revenue obtained from client $i$ under algorithm $\varpi$ is defined as
\begin{equation}
A^{\varpi}_i(k) \triangleq
    \sum_{t = kN}^{kN+N-1} \sum_{s} [M^{\varpi}(t, \tq(t))]_{is}\cdot \tc_{\tq(t)is}(t) \cdot r_{\tq(t)i}.  \label{eq: per-client-rev}
\end{equation}
We then define average revenue obtained from client $i$ per budgeting cycle as $\lambda^{\varpi}_i \triangleq E[A^{\varpi}_i(k)]$ in the steady state. The long-term average revenue (per time slot) is thus $\avgR^{\varpi} = \sum_{i} \lambda_i^{\varpi} / N$, and the overdraft level of client $i$ evolves as
\begin{eqnarray}
&& Q_i^{\varpi}(k+1) = \nonnegpart{Q_i^{\varpi}(k)+ A^{\varpi}_i(k) - \tb_i(k)}. \label{eq: overdraft_que}
\end{eqnarray}
Note that our online algorithm is one particular $\varpi$, which makes the decision based on the current overdraft levels of all clients.

To seek a universal lower bound on expected overdraft level in the long run (here, equivalent to steady state), we only have to consider those algorithms $\varpi$ such that $\avgQ^{\varpi}_i \triangleq E[Q_i^{\varpi}(k)] < \infty$ for all $i$. To categorize these ``stable'' algorithms, 
we define ``per-client revenue region,'' similar to the concept of ``capacity region'' in the context of queueing networks: \\

\begin{definition}[``Per-Client Revenue Region''] \label{def: cap_region}
$$
\cC \triangleq \!\left\{\blambda^{\varpi}\! = \!\{\lambda^{\varpi}_i\}\! \ge\! \mathbf{0}\!:\exists\varpi ~s.t.~\lambda_i^{\varpi} \triangleq E\left[A_i^{\varpi}(k)\right] \le b_i,~\forall i \right\},
$$
given fixed parameters $\{r_{qi}\}$, $\{b_i\}$, $\{c_{qis}\}$, $N$ and statistical properties of $\tq(t)$ and $\{\tc_{qis}(t)\}$.
\hfill $\diamond$
\end{definition}
The offline optimal average revenue is then equal to 
$\max_{\blambda \in \cC} \sum_{i} \lambda_i / N, $ which is denoted as $\avgR^*$.\\ 

Note that if the query arrival rates per budgeting cycle are too low, the average revenue drawn from some client will never hit its specified budget, no matter which algorithm $\varpi$ s.t. $\blambda^{\varpi} \in \cC$ you pick (i.e., $\exists~ i$ s.t. no feasible solution $\pp$ can make constraint \eqref{eq: budget} for this $i$ tight). The system resources (here, budgets) are underutilized and it is not so important to consider the tradeoff between revenue and overdraft. To avoid this, we can assume a relatively large $N$ (i.e., the number of time slots in one budgeting cycle) such that 
\begin{equation}\label{eq: large_N}
N \ge \max_i  \left\{ \frac{b_i}{\sum_q \nu_q r_{qi}\cdot \max_{M\in \cM^i_q}c_{qis(i,M)}} \right\},
\end{equation} 
where $\cM^i_q \subseteq \cM_q$ is defined as a set of ad assignment matrices, of which the $i^{\rm th}$ row has a ``1'', and $s(i,M)$ in $c_{qis(i,M)}$ refers to the column in $M$ where that ``1'' stays. This guarantees that for each $i$, there exists an algorithm $\varpi_i$ such that $\blambda^{\varpi_i} \in \cC$ and $\lambda^{\varpi_i}_i = b_i$. The reason is that

In the following text, we will assume the above condition for $N$.

\subsection{One Keyword, One Client and One Webpage Slot}
\label{sec: single_que_LB}
We start with the simplest model: one keyword, one client and one webpage slot (hence we omit all the subscripts in the corresponding notations). Under condition \eqref{eq: large_N}, the offline maximum average revenue is trivially $b/N$.\\

\begin{theorem}
\label{thm: single_que_LB}
Given a small $\epsilon >0$, if an algorithm $\varpi$ leads to $E[A^{\varpi}(k)] \ge b - \epsilon$ in the steady state, then 
$$\avgQ^{\varpi} \ge \frac{\log (1/\epsilon)}{2(1-\log(\varphi P_+))}-1,$$
where we assume that 
$$\varphi \triangleq \Pr(\mbox{no query arrival in a budgeting cycle}) >0,$$ 
and $P_+ \triangleq \Pr(\tb(k) > 0) >0$.\hfill$\diamond$
\end{theorem}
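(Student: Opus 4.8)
The plan is to derive a lower bound on the steady-state queue length by exploiting a "renewal-like" argument centered on the empty-query event. Since the single overdraft queue evolves as $Q(k+1) = [Q(k) + A^{\varpi}(k) - \tb(k)]^+$ with $A^{\varpi}(k) \ge 0$ and $E[A^{\varpi}(k)] \ge b - \epsilon$, the queue must be "busy" most of the time: if $\bar Q^{\varpi}$ were small, then by a drift/rate-conservation argument the long-run departure rate $E[\min\{Q(k)+A^{\varpi}(k),\tb(k)\}]$ would have to nearly equal $E[\tb(k)] = b$, which already forces $Q$ to rarely be near $0$. The key point is that whenever several consecutive budgeting cycles see \emph{no} query arrivals, $A^{\varpi}(k) = 0$ in each of those cycles, so $Q$ decreases by exactly $\tb(k)$ each such cycle (until it hits $0$). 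A run of $m$ consecutive empty cycles therefore drives $Q$ down by (at most) the sum of $m$ i.i.d.\ budget draws; for $Q$ to return near $0$ "quickly" after such a run, $Q$ must have been correspondingly large just before it.

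First I would make precise the statement that $Q$ is large with non-negligible probability. Let $\varphi = \Pr(\text{no query in a budgeting cycle}) > 0$ and $P_+ = \Pr(\tb(k) > 0) > 0$. Consider the event $E_m$ that cycles $k, k+1, \dots, k+m-1$ are all empty of queries; this has probability $\varphi^m$ by the i.i.d.\ assumption on $\{\tq(t)\}$ (and independence of the budget draws). On $E_m$, starting from $Q(k)$, we have $Q(k+j) \ge Q(k) - \sum_{\ell=0}^{j-1}\tb(k+\ell)$ for each $j \le m$; equivalently, $Q(k) \le Q(k+m) + \sum_{\ell=0}^{m-1}\tb(k+\ell)$, but more usefully, if $Q(k+m) = 0$ then $Q(k) \le \sum_{\ell} \tb(k+\ell) \le m \cdot b_{\max}$ roughly — so to have a large $Q(k)$ survive a long empty run we need $m$ large, i.e.\ a rare event. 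I would turn this around: pick $m \approx \frac{\log(1/\epsilon)}{2(1-\log(\varphi P_+))}$ (the target bound plus one), and show that if $\bar Q^{\varpi} < m - 1$, then by Markov's inequality $Q(k)$ is below some threshold with decent probability, and then a union/independence bound over the empty-run event $E_m$ forces too much "idling," contradicting $E[A^{\varpi}(k)] \ge b-\epsilon$. Concretely, the idling (lost revenue) in a cycle is at least $(\tb(k) - A^{\varpi}(k))^+ \ge$ something positive whenever $Q$ hits $0$ during an empty run, and the probability of that scenario is bounded below by a quantity of order $\varphi^m P_+^m = (\varphi P_+)^m$; choosing $m$ so that $(\varphi P_+)^m \approx \epsilon$ (which is exactly what the $\log$ in the denominator encodes, since $-\log(\varphi P_+) = 1 - \log(\varphi P_+) - 1$... careful bookkeeping gives the stated constant) makes the lost revenue $\gtrsim \epsilon$ times a constant, and matching this against the allowed $\epsilon$-suboptimality pins down $\bar Q^{\varpi} \gtrsim m$.

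The main obstacle, I expect, is getting the constants to line up to produce exactly $\frac{\log(1/\epsilon)}{2(1-\log(\varphi P_+))} - 1$ rather than merely $\Omega(\log(1/\epsilon))$. This requires being careful about: (i) whether "empty run" probabilities multiply cleanly (they do, given the stated independence of $\{\tq(t)\}$ across slots and of the $\tb(k)$'s); (ii) the factor of $2$, which presumably comes from a Markov/Chebyshev step applied to $Q$ at the \emph{start} of the run combined with the geometric decay, or from splitting a run of length $2m$ into "build-down" and "stay-empty" halves; and (iii) correctly accounting for the fact that $\tb(k)$ can be $0$ with probability $1 - P_+$, so an empty cycle does not always decrement $Q$ — this is exactly why $P_+$ enters the bound, and one must condition on the budget draws being positive during the run, contributing the $P_+^m$ factor alongside $\varphi^m$. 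A secondary technical point is justifying that the steady-state expectations exist and that the rate-conservation identity $\lambda^{\varpi} = E[\text{departures}] $ holds, which follows from the assumed positive recurrence / finiteness $\bar Q^{\varpi} < \infty$ and the standard Loynes/Lyapunov machinery already invoked earlier in the paper. I would organize the final write-up as: (1) rate conservation and the "lost revenue per cycle" lower bound; (2) the empty-run event and its probability; (3) a deterministic lower bound relating $Q$ at the run's start to the run length and the budget draws; (4) combine via Markov's inequality and optimize over $m$ to extract the $\log(1/\epsilon)$ scaling with the stated constant.
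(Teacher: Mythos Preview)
Your proposal is correct and matches the paper's approach: rate conservation bounds the ``idling'' probability $\tilde p_0 := \Pr(\hat b(k)=0,\ \tilde b(k)>0)$ above by $\epsilon$, while a union bound over $n$ consecutive no-arrival/positive-budget cycles together with Markov's inequality on $Q$ bounds $n\tilde p_0$ below by $(\varphi P_+)^n(1-\bar Q/n)$, and taking $n=\lceil 2\bar Q\rceil$ with the elementary inequality $1/(2x)\ge e^{-x}$ yields exactly the stated constant. The one sharpening worth making explicit is to track this indicator probability rather than the expected lost revenue $(\tilde b - A)^+$; working with $\tilde p_0$ is what makes both the upper bound (from $E[A]=E[\hat b]$) and the lower bound (from the run event) line up cleanly without extra constants.
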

\

Note that this result works for any query arrival and budget spending model satisfying the above two stated assumptions, and not only restricted to the model we described in Subsection \ref{sec: stochastic}. 
In the proof below, we generally write $\tb(k)$ as a random variable which can possibly take all nonnegative integer values. 

\begin{proof}
We ignore the superscript $\varpi$ for brevity. 
The dynamics of the queue is rewritten as
$Q(k+1) = Q(k) + A(k) - \hb(k)$, 
where the actual departure process is defined as
\begin{equation}
\hb(k)\triangleq \left\{
\begin{array}{cl}
\tb(k) & \mbox{if}~~Q(k)+A(k)-\tb(k)\ge 0;\\
Q(k)+A(k) & \mbox{otherwise.} \label{eq: def_eff_serv}
\end{array}
\right.
\end{equation}
Let $p_i \triangleq \Pr(\hb(k)=i)$ and $q_i \triangleq \Pr(\tb(k)=i)$ in the steady state. Note that
\begin{eqnarray*}
b-\epsilon &\le& \!\!E[A(k)] = E[\hb(k)] =  \sum_{i=1}^{\infty}\Pr(\hb(k)\ge i) = \Pr(\hb(k)\ge 1) + \sum_{i=2}^{\infty}\Pr(\hb(k)\ge i)\\  
&\stackrel{(a)}{\le}&\!\!(1-p_0)+\sum_{i=2}^{\infty} \Pr(\tb(k)\ge i) 
=1-p_0 + \left(b - \Pr(\tb(k) \ge 1)\right) = 1-p_0+b-(1-q_0)\\
&=&\!\! q_0-p_0+b,
\end{eqnarray*}
where (a) holds because $\Pr(\hb(k) \ge i) \le \Pr(\tb(k)\ge i)$ for all $i\ge 0$. Thus, $p_0 \le q_0 + \epsilon$. Since
$$\Pr(\hb(k) = 0) = \Pr(\tb(k)=0) + \Pr(\hb(k)=0,~\tb(k)\ge 1),$$ 
we have $p_0 = q_0 + \tilde{p}_0$, where 
$\tilde{p}_0\triangleq \Pr(\hb(k)=0,~\tb(k)\ge 1)$. 
Therefore, 
\begin{equation}
\tilde{p}_0 \le \epsilon. \label{eq: zero_depart_UB}
\end{equation}

Next, we are looking for a lower bound on $\tilde{p}_0$ in relation to $\avgQ$. 
Letting $P_+ \triangleq \Pr(\tb(k) > 0)$ (which is surely a positive constant since $b >0$), we then have
\begin{eqnarray}
n \tilde{p}_0 &=& \sum_{k=0}^{n-1}\Pr(\hb(k) =0,~\tb(k)> 0)
\stackrel{(a)}{\ge} \Pr\left(\bigcup_{k=0}^{n-1} \{\hb(k)=0,~\tb(k)> 0\}\right)\nonumber\\
& \stackrel{(b)}{\ge}& \Pr(Q(0) \le n-1;~A(k) = 0, 
\tb(k)> 0,~\forall~ 0\le k \le n-1)\nonumber\\
&=& \Pr(Q(0) \le n-1) 
\cdot \prod_{k = 0}^{n-1}  \Pr(A(k)=0)\cdot\Pr(\tb(k)> 0) \nonumber\\
&=& (\varphi P_+)^{n}\cdot \Pr(Q(0) \le n-1) 
\stackrel{(c)}{\ge} (\varphi P_+)^n \left(1-\avgQ /n\right), \label{eq: np_0_LB}
\end{eqnarray}
where (a) holds according to the union bound, (b) holds since the event on the RHS implies the one on the LHS, 
and (c) holds due to the Markov inequality. If we pick 
$n := \left\lceil 2\avgQ\right\rceil \in [2\avgQ,  2\left(\avgQ+1\right)],$ 
inequality \eqref{eq: np_0_LB} further implies that
\begin{equation}
\tilde{p}_0 \ge \frac{(\varphi P_+)^n}{2n} \stackrel{(e)}{\ge} e^{-n(1-\log( \varphi P_+))} \ge e^{-2(\avgQ+1)(1-\log(\varphi P_+))}, \label{eq: zero_arr_LB}
\end{equation}
where (e) holds because $\frac{1}{2x} \ge e^{-x}$ for all $x> 0$. Combining inequalities \eqref{eq: zero_depart_UB} and \eqref{eq: zero_arr_LB} then completes the proof.
\end{proof}


\

In the related literature, \cite{BerGal_ToIT02_EnergyDelayTradeoff} comes up with an $\Omega(1/\sqrt{\epsilon})$ bound for a set of algorithms under some admissibility conditions,  
while \cite{Neely_TAC09_EnergyDelayTradeoff} provides an $\Omega(\log(1/\epsilon))$ bound for more general algorithms. 

Our proof uses the following ideas inspired by \cite{Neely_TAC09_EnergyDelayTradeoff}: if the throughput is lower bounded by a number close to the average potential departure rate, then the probability of zero actual departures given nonzero potential departures must be upper bounded by a small number; further, if the average queue length is given, then the probability of hitting zero must be upper bounded because otherwise, the queue length would become small. However, we cannot directly use the expression for the lower bound in \cite{Neely_TAC09_EnergyDelayTradeoff} since it  imposes certain strict convexity assumptions which do not apply to our model where the objective is linear. So we have provided a very simple derivation of the lower bound on the queue length for our specific model.

Additionally, our $\Omega(\log(1/\epsilon))$ bound based on a linear objective function can be extended to the multi-queue case (in Subsection~\ref{sec: multi_que_LB}). The $\Omega(1/\sqrt{\epsilon})$ bound in \cite{BerGal_ToIT02_EnergyDelayTradeoff} has been extended to the multi-queue case in \cite{neely2007optimal} 
but still under strict convexity assumption and for a restrictive class of algorithms. 
Whether the $\Omega(\log(1/\epsilon))$ bound in \cite{Neely_TAC09_EnergyDelayTradeoff} can be easily extended to multiple queues still remains a question.


\subsection{Multiple Keywords, Multiple Clients and Multiple Webpage Slots}
\label{sec: multi_que_LB}
We now extend this lower bound to the original general model, which can have multiple keywords, multiple clients and multiple webpage slots. 
It is easy to see that the ``per-client revenue region'' $\cC$ in Definition \ref{def: cap_region} is a polytope, 
which can then be rewritten as
\begin{equation}
\cC = \left\{\blambda \ge \mathbf{0}:~\sum_{i} h_i^{(n)}\lambda_i \le d^{(n)},~\forall~1\le n \le L \right\}, \label{eq: cap_region_hplane}
\end{equation}
where $h_i^{(n)} \ge 0$ and $d^{(n)} > 0$ for all $i$ and $n$.
The outer boundary of the polytope $\cC$ consists of the $L$ hyperplanes, i.e., $\sum_{i} h_i^{(n)}\lambda_i = d^{(n)}$ for all $n\in [1,L]$. 

Under condition \eqref{eq: large_N}, $L$ is at least equal to the number of clients (i.e., number of budget constraints), so \eqref{eq: cap_region_hplane} gives a more precise description of the stability condition for this ``multi-queue system,'' compared to the original definition of $\cC$. Thus, corresponding to the normal vector of each hyperplane, we convert the original multi-queue system into a new one with $L$ queues: 
For each $n \in [1,L]$, we first scale the $i^{\rm th}$ queue described in \eqref{eq: overdraft_que} by $h_i^{(n)}$, so that
it has a queue length equal to $h_i^{(n)}Q_i(k)$, with $h_i^{(n)} A_i(k)$ arrivals and $h_i^{(n)} \tb_i(k)$ potential departures in time slot $k$, for all $i$. Next, we treat $\sum_{i} h_i^{(n)} Q_i(k)$ as the $n^{\rm{th}}$  queue, and since any $\blambda \in \cC$ satisfies $\sum_{i} h_i^{(n)}\lambda_i \le d^{(n)}$, its maximum achievable average departure rate equals $d^{(n)}$, where
$d^{(n)} \le \sum_{i} h_i^{(n)}b_i$, 
because the potential departure rate of each individual scaled queue may not be fully achieved when all of them are coupled together. 

We then come up with the formal definition of the class of algorithms which achieves a ``near-optimal'' average revenue.\\

\begin{definition}[``\boldmath$\epsilon$\unboldmath-Neighbourhood'' of the maximum] Let $\blambda^*$ be one optimal point in $\cC$ such that  $\sum_i \lambda^*_i = \avgR^*$. 
The $\epsilon$-neighbourhood of $\blambda^*$ is defined as
\begin{equation}
\cN_{\epsilon} \triangleq \{\blambda^{\varpi}\in\cC \setminus \partial \cC:~0 < N\cdot (\avgR^*-\avgR^{\varpi}) \le \epsilon\}, \label{def: epsilon-neighbor}
\end{equation}
where $\partial \cC$ represents the outer boundary of $\cC$, and it should be noted that the average revenue is evaluated per time slot while $\blambda$ is evaluated per $N$ time slots.
\hfill $\diamond$
\end{definition}
\

Note that in the above definition, since $\blambda^{\varpi} \in \cN_{\epsilon}$ is not on any boundary, $\avgR^*$ is strictly larger than $\avgR^{\varpi}$, which is easy to see from some basic principles of linear programming.

The following theorem shows the universal lower bound  $\Omega(\log(1/\epsilon))$ for the general case.\\ 

\begin{theorem}
\label{thm: lb}
For any algorithm $\varpi$ s.t. $\blambda^{\varpi} \in \cN_{\epsilon}$, we have $$\sum_{i=1}^M \bar{Q}_i^{\varpi}  \ge  \frac{\log (1/\epsilon) - C_2}{C_1} -1,$$
where $\varphi \triangleq \Pr(\mbox{no query arrival in a budgeting cycle}) = (1-\nu)^N >0$, $P_+ \triangleq \Pr(\tb_i(k) > 0,~\forall i) > 0,$ and 
\begin{eqnarray}
C_1 &\triangleq& 2(1-\log(\varphi P_+))\cdot \max_{i,n}h^{(n)}_i \in (0,\infty),\nonumber\\
C_2 &\triangleq& \max\{\log(\max_{i,n} h^{(n)}_i),0\} \in [0,\infty). \label{eq: lb_const}
\end{eqnarray}
\hfill $\diamond$
\end{theorem}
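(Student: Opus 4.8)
The plan is to reduce the multi-queue lower bound to the single-queue result in Theorem~\ref{thm: single_que_LB} by identifying a suitable scalar ``test queue'' among the $L$ hyperplane-defined queues constructed just above the statement. Concretely, fix an algorithm $\varpi$ with $\blambda^\varpi \in \cN_\epsilon$. Since $\blambda^\varpi$ is in the interior of $\cC$ relative to at least one of the facets achieving the optimum, and since $N(\avgR^* - \avgR^\varpi) \le \epsilon$, there must be some index $n^\star \in [1,L]$ for which the scaled aggregate queue $Z^{(n^\star)}(k) \triangleq \sum_i h_i^{(n^\star)} Q_i^\varpi(k)$ has average potential departure rate $d^{(n^\star)}$ but average actual arrival rate at least $d^{(n^\star)} - \epsilon$. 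The first step is therefore to argue, from the geometry of the polytope $\cC$ and the definition of $\cN_\epsilon$, that such an $n^\star$ exists: the revenue gap of at most $\epsilon$ forces $\blambda^\varpi$ to be within $\epsilon$ of some optimal facet, and the normal vector of that facet supplies the weights $h_i^{(n^\star)}$; the aggregate revenue shortfall $\sum_i h_i^{(n^\star)}(b_i - \lambda_i^\varpi)$ translating into a departure-rate shortfall of at most $O(\epsilon)$ on the combined queue needs a short linear-programming argument (possibly absorbing a constant factor depending on $\max_{i,n} h_i^{(n)}$, which is exactly where $C_1, C_2$ come from).

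The second step is to verify that $Z^{(n^\star)}$ is an honest nonnegative integer-valued (up to the scaling constants $h_i^{(n)}$, which can be cleared) discrete-time queue of the type covered by Theorem~\ref{thm: single_que_LB}: its update is componentwise of the form $\nonnegpart{Q_i^\varpi(k) + A_i^\varpi(k) - \tb_i(k)}$, so summing with the nonnegative weights $h_i^{(n^\star)}$ gives a process dominated by the reflected recursion $Z(k+1) = \nonnegpart{Z(k) + \sum_i h_i^{(n^\star)} A_i^\varpi(k) - \sum_i h_i^{(n^\star)}\tb_i(k)}$ — actually one must be a little careful because each coordinate reflects separately, so $Z^{(n^\star)}$ is only bounded below by, not equal to, the singly-reflected process; but since the true $Z^{(n^\star)}(k) \ge 0$ always and hits $0$ only when every $Q_i^\varpi(k) = 0$, the ``zero-departure given positive potential departure'' event used in the proof of Theorem~\ref{thm: single_que_LB} still has the same meaning. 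The key probabilistic inputs are preserved: $\varphi = (1-\nu)^N$ is the probability of no query arrival (hence no revenue, hence $\sum_i h_i^{(n^\star)} A_i^\varpi(k) = 0$) in a budgeting cycle, and $P_+ = \Pr(\tb_i(k) > 0~\forall i)$ is the probability all potential departures are positive — note these are genuinely positive by the standing assumption $\nu \in (0,1)$ and $b_i > 0$.

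The third step is to run the argument of Theorem~\ref{thm: single_que_LB} verbatim on $Z^{(n^\star)}$: from the $\le O(\epsilon)$ throughput gap deduce $\tilde p_0 \le O(\epsilon)$ where $\tilde p_0$ is the steady-state probability of zero actual aggregate departure despite positive potential departure; then the union-bound / Markov-inequality chain (inequalities \eqref{eq: np_0_LB}–\eqref{eq: zero_arr_LB}) gives $\tilde p_0 \ge e^{-2(\bar Z^{(n^\star)} + 1)(1 - \log(\varphi P_+))}$ where $\bar Z^{(n^\star)} = \sum_i h_i^{(n^\star)} \bar Q_i^\varpi$; combining yields $\sum_i h_i^{(n^\star)} \bar Q_i^\varpi \ge \frac{\log(1/\epsilon) - O(1)}{2(1-\log(\varphi P_+))} - 1$, and bounding $\sum_i h_i^{(n^\star)} \bar Q_i^\varpi \le (\max_{i,n} h_i^{(n)}) \sum_i \bar Q_i^\varpi$ converts this to the claimed bound with $C_1 = 2(1-\log(\varphi P_+))\max_{i,n}h_i^{(n)}$ and $C_2 = \max\{\log(\max_{i,n}h_i^{(n)}), 0\}$ (the $C_2$ term accounting for the scaling constant in passing from the weighted sum back to a clean ``$\log(1/\epsilon)$'' in the integer-queue normalization).

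The main obstacle I expect is the first step — pinning down that the $O(\epsilon)$ \emph{revenue} gap (a gap in $\sum_i \lambda_i^\varpi$) forces an $O(\epsilon)$ \emph{departure-rate} gap on one of the weighted facet-queues $\sum_i h_i^{(n)}\lambda_i^\varpi$ versus $d^{(n)}$. This is the place where one needs the polytope structure: being $\epsilon$-close to the optimum in the objective direction does not a priori mean being $\epsilon$-close to any single facet unless the objective is (nearly) aligned with that facet's normal, so one has to invoke that the optimal face of a linear program is a face of $\cC$, express $\blambda^*$ and its neighbourhood in terms of the active constraints there, and track how the constants $\max_{i,n} h_i^{(n)}$ enter when the objective normal is written as a combination of the active facet normals. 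Everything after that is a mechanical replay of the one-dimensional proof, so the geometric/LP bookkeeping is the only genuinely new ingredient.
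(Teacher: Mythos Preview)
Your overall architecture is exactly the paper's: pick a facet index $\tilde n$, apply Theorem~\ref{thm: single_que_LB} to the scalar queue $\sum_i h_i^{(\tilde n)} Q_i^\varpi$, then divide by $\max_i h_i^{(\tilde n)}$ to pass to $\sum_i \bar Q_i^\varpi$. Your steps~2 and~3 are essentially what the paper does (the paper is even more casual about the reflection issue than you are; it simply invokes Theorem~\ref{thm: single_que_LB} and replaces $P_+$ by the smaller $\Pr(\tb_i(k)>0~\forall i)$, which is your $P_+$).

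Where you diverge is in step~1, and you correctly flag it as the crux. Your proposed route---write the objective normal $(1,\dots,1)$ as a conic combination of active facet normals via LP duality and track constants---is plausible but vague, and controlling the dual multipliers to get precisely the stated $C_1,C_2$ would take real work. The paper sidesteps duality entirely with a short and elementary trick: given $\blambda\in\cN_\epsilon$, pick any $\btheta$ with $\theta_i\ge\lambda_i$ for all $i$ and $\sum_i\theta_i=\sum_i\lambda_i^*$ (always possible since $\sum_i\lambda_i<\sum_i\lambda_i^*$). Because $\btheta$ attains the optimal revenue it cannot lie in the interior of $\cC$, so some constraint $\tilde n$ is violated or tight: $\sum_i h_i^{(\tilde n)}\theta_i\ge d^{(\tilde n)}$. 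Then
\[
d^{(\tilde n)}-\sum_i h_i^{(\tilde n)}\lambda_i \;\le\; \sum_i h_i^{(\tilde n)}(\theta_i-\lambda_i)\;\le\; h^{(\tilde n)}_{\max}\sum_i(\theta_i-\lambda_i)\;=\;h^{(\tilde n)}_{\max}\sum_i(\lambda_i^*-\lambda_i)\;\le\; h^{(\tilde n)}_{\max}\,\epsilon,
\]
the second inequality using $\theta_i-\lambda_i\ge 0$ componentwise. This gives exactly the throughput gap $h^{(\tilde n)}_{\max}\epsilon$ on the $\tilde n$-th aggregate queue, and that factor is what produces the $C_2=\max\{\log(\max_{i,n}h_i^{(n)}),0\}$ term after plugging into Theorem~\ref{thm: single_que_LB}. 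So your plan is sound but your ``main obstacle'' has a two-line resolution you have not found; there is no need for active-set bookkeeping or explicit dual variables.
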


\begin{proof}
We ignore the superscript $\varpi$ for brevity. According to some basic principles of linear programming, an optimal point $\blambda^*$ is at a corner of $\cC$. If there are several optimal points, any convex combination of them is also optimal. Denote this optimal point sets as $\Lambda^*$ and $\forall \lambda^*\in \Lambda^*$, $\exists~n^*\in [1,L]$, s.t. $\sum_{i} h_i^{(n^*)}\lambda_i^* = d^{(n^*)}$.

Given a $\blambda \in \cN_{\epsilon}$, $\exists~\btheta$ s.t. $\sum_{i} \theta_i = \sum_{i} \lambda_i^*$ and $\theta_i \ge \lambda_i$ for all $i$ (but at least one inequality is strict). Besides, for this $\btheta$, $\exists~\tn\in [1,L]$, s.t. $ \sum_{i} h_i^{(\tn)} \theta_i \ge d^{(\tn)}$ (otherwise, $\btheta \in \cC \setminus \partial \cC$ will hold and hence $\sum_{i} \theta_i < \sum_{i} \lambda_i^*$, which leads to a contradiction). Therefore, 
\begin{eqnarray}
d^{(\tn)} \!-\! \sum_{i} h_i^{(\tn)} \lambda_i \!\!\!&\le &\! \sum_{i} h_i^{(\tn)} (\theta_i - \lambda_i) \stackrel{(a)}{\le}  h^{(\tn)}_{max} \sum_{i} (\theta_i - \lambda_i) \nonumber \\
&=&  h^{(\tn)}_{max} \sum_{i} (\lambda^*_i - \lambda_i) \le  h^{(\tn)}_{max}\epsilon, \label{eq: service_surplus}
\end{eqnarray}
where $h^{(\tn)}_{max} \triangleq \max_{i} h^{(\tn)}_i>0$ and inequality (a) holds because $\theta_i \ge \lambda_i$ for all $i$.
Letting $P'_+ \triangleq \Pr(\sum_{i} h_i^{(\tn)} \tb_i(k) > 0)$, it is easy to see that $P'_+ \ge \Pr(\tb_i(k) > 0,~\forall i) = P_+ >0$. Together with Theorem \ref{thm: single_que_LB}, we can conclude that 
\begin{eqnarray}
\sum_{i} h_i^{(\tn)}\bar{Q}_i 
&\ge & \frac{\log (1/\epsilon)-\log(h^{(\tn)}_{max})}{2(1-\log(\varphi P'_+))}-1
\ge \frac{\log (1/\epsilon)-\log(h^{(\tn)}_{max})}{2(1-\log(\varphi P_+))}-1.
\nonumber
\end{eqnarray}
Since $\sum_{i} h_i^{(\tn)}\bar{Q}_i \le h^{(\tn)}_{max} \sum_{i} \avgQ_i$, it is further concluded that
\begin{equation*}
\sum_{i} \avgQ_i \ge 
 \frac{\log (1/\epsilon)-\log(h^{(\tn)}_{max})}{2 h^{(\tn)}_{max}(1-\log(\varphi P_+))}-1 
\ge \frac{\log (1/\epsilon) - C_2}{C_1} - 1,
\end{equation*}
where the universal constants are defined in \eqref{eq: lb_const}, 
and it is guaranteed that $C_1 \in (0,\infty)$ and $C_2 \in [0,\infty)$. This completes the proof. 
\end{proof}

\begin{figure}[t]
\centering
\includegraphics[width=8.7cm]{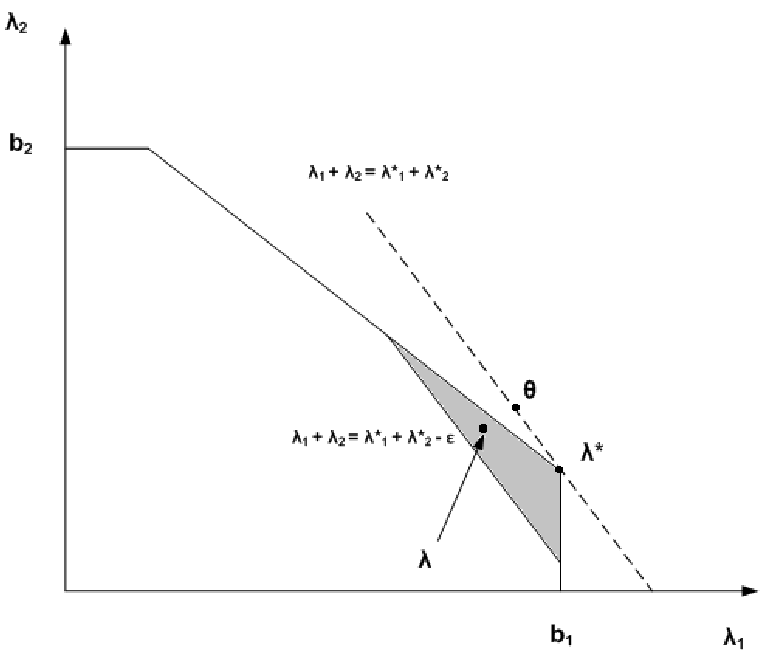}
\caption{An illustration of the idea in the proof of Theorem~\ref{thm: lb}} 
\label{fig: 2D_proof}
\end{figure}
\

\begin{remark}
We briefly explain the idea behind choosing $\theta$ in the above proof: For those $\blambda\in \cN_{\epsilon}$ such that $\lambda_i  \le \lambda_i^*$ for all $i$ (at least one is strict), $\btheta$ can be directly chosen as $\blambda^*$ to make inequality (a) in (\ref{eq: service_surplus}) hold. But for the other $\blambda\in \cN_{\epsilon}$ which do not satisfy the above condition, it is necessary to introduce a $\btheta$ other than $\blambda^*$, which both lies on the ``maximum revenue line'' (i.e., $\sum_{i} \theta_i = \sum_{i} \lambda_i^*$) and dominates $\blambda$ component-wise, in order to derive inequality~(\ref{eq: service_surplus}). Note that $\btheta$ is not unique and furthermore, $\btheta$ lies either on $\partial \cC$ or in the exterior of $\cC$ and it can be chosen as a boundary point only if the optimal revenue point is not unique. Figure~\ref{fig: 2D_proof} illustrates this idea using an example with one keyword, two clients and one webpage slot, 
specifically for showing where such a $\btheta$ is located. 
\hfill $\diamond$
\end{remark}

The basic idea in our proof is to use Theorem \ref{thm: single_que_LB} to first get a lower bound for those new single queues written as a ``weighted sum'' of the original queues (described above). This idea is similar to one part in the proof for the lower bound on the expected queue length of a departure-controlled multi-queue system in \cite{shah_Allerton08_LB}, but some technique in their proof cannot directly apply to arrival-controlled queues like ours.

\subsection{Tightness of the Lower Bound}
\label{sec: LB_tightness}
We want to show that the $\Omega(\log(1/\epsilon))$ universal lower bound is tight, i.e., achievable by some algorithms. Consider the following simple queueing model: the arrival process $a(k)$ is i.i.d. across time, $a(k) = 2$ w.p. $\nu$ and $a(k)=0$ otherwise. The service rate is constant and equal to $1$. Assume that $ \nu \in (1/2, 1)$. With the controlled arrival process $\ha(k)$, we want to achieve a throughput $E[\ha(k)] \ge 1-\epsilon$ for a given small $\epsilon>0$. A ``threshold policy'' based on a threshold $T$ is proposed below:
\begin{itemize}
\item When $Q(k) > T$, reject all arrivals;
\item When $Q(k) = T$, accept one arrival w.p. $p_1$, accept two arrivals w.p. $p_2$, and reject all of them otherwise. 
\item When $Q(k) < T$, accept all arrivals.
\end{itemize} 
Defining $\pi_i$ as the steady-state probability that $Q(k) =i$ ($0\le i \le T+1$) for the resulting Markov chain, the local balance equations are given below:
\begin{eqnarray}
\pi_i \nu &= &\pi_{i+1} (1-\nu),~\forall~0\le i\le T-2; \nonumber\\
\quad \pi_{_{T-1}} \cdot p_1\nu  &=& \pi_{_{T}}(1-(p_1+p_2)\nu);\nonumber\\
\quad \pi_{_{T}} \cdot p_2\nu &=& \pi_{_{T+1}};\nonumber\\
\quad \sum_{i=0}^{T+1} \pi_i &=& 1. \label{eq: local_balance}
\end{eqnarray}
Combining these equations with the throughput requirement, we get  
\begin{equation}
\nu \left[ 2\sum_{i=0}^{T-1} \pi_i + \pi_{_T} (2p_2+p_1)\right] = 1 - \epsilon, \label{eq: required_throughput_eq}
\end{equation}
and one can finally show that (ignoring detailed calculations)
$$
T = 
\frac{\log(1/\epsilon)+\log C(\epsilon)}{\log \left(\frac{\nu}{1-\nu}\right)},$$
where $$C(\epsilon) \triangleq \frac{(2\nu-1+\epsilon)(1-\nu(p_1+p_2))}{\nu(2-2(1-\nu)p_2 - p_1)}.
$$
The above result further implies that $\avgQ \sim \Theta(\log(1/\epsilon))$. we can also see that as $\nu \rightarrow 1$, $T\rightarrow 0$, which is consistent with the fact the lower bound given in Theorem \ref{thm: single_que_LB} goes to $0$ as the ``zero arrival probability'' $\varphi \rightarrow 0$.  


Another example showing the tightness of an $\Omega(\log(1/\epsilon))$ bound is the dynamic packet dropping algorithm in \cite{Neely_TAC09_EnergyDelayTradeoff} (note that this universal lower bound is proved based on a strict convexity assumption as mentioned before in Subsection \ref{sec: single_que_LB}).

\section{Click-Through Rate Maximization Problem}
\label{sec: click-through}
In this section, we consider another online ads model, in which the objective is to maximize the long-term average total click-through rate of all queries. Instead of average budget, client $i$ specifies in the contract an average ``{\em impression requirement}'' $m_i$, which is the minimum number of times an ad of this client should be posted by the service provider per ``{\em requirement cycle}'' (equal to $N$ time slots) on average. The other parameters are the same as in the model proposed in Section \ref{sec: intro} for the revenue maximization problem. 

The corresponding optimization formulation now becomes
\begin{equation}
\max_{\pp \in \mathcal{F}} \avgJ(\pp) = \sum_q\nu_q  \sum_{M\in \cM_q}\pqM \sum_{i,s} M_{is} c_{qis} \label{obj: max_click-rate}
\end{equation}
where the feasible set $\mathcal{F}$ is characterized by  
\begin{eqnarray}
N \sum_q\nu_q  \sum_{M\in \cM_q}\pqM \sum_{s} M_{is} \ge m_i,\qquad \forall i;\label{eq: obligation}\\
0\leq \pqM \leq 1,\qquad \forall q,~M \in \cM_q;\label{eq: prob2}\\
\sum_{M\in \cM_q} \pqM \leq 1, \qquad \forall q. \label{eq: prob_sum2}
\end{eqnarray}

Different from the revenue maximization problem, here the feasible set can become empty if some $m_i$ is too high. Basically, without constraint \eqref{eq: obligation}, $\mathcal{F}$ is relaxed to 
\begin{equation}
\mathcal{F}_0 \triangleq \{\pp: 0\leq \pqM \leq 1,~\forall q,M \in \cM_q; \displaystyle\sum_{M\in \cM_q} \pqM \leq 1, ~\forall q\}.\label{eq: feasible_relaxed}
\end{equation}
We can then define the following capacity region which characterizes how large the average number of impressions can be achieved for each client per requirement cycle:
$$
\cC \triangleq \left\{
\bmu: \mu_i \! =\! N \!\sum_q\nu_q \!\!\!\! \displaystyle \sum_{M\in \cM_q}\pqM \!\!\sum_{s} M_{is},~\forall i,~s.t.~\pp \in \mathcal{F}_0
\right\}.
$$
Clearly, $\mm\in \cC$ must hold to ensure the existence of a solution for the above optimization problem. 

Through a similar approach as in Subsection \ref{sec: iterative_sol}, 
we can write down a similar online algorithm based on the same stochastic model as defined in Subsection \ref{sec: stochastic}. 
We define $\qq(k) \triangleq \{\tq(t),~\mbox{for}~kN \le t \le kN+N-1\}$. Similar to $\tb_i(k)$, $\tm(k) = \lceil m_i \rceil$ w.p. $m_i - \lfloor m_i \rfloor$ and $\tm(k) = \lfloor m_i \rfloor$ otherwise.
   
\noindent\hrulefill\\
\textbf{Online Algorithm:} (in each requirement cycle $k\ge 0$)

In each time slot $t \in [kN, kN+N-1]$, if $\tq(t) > 0$, choose the assignment matrix
\begin{eqnarray}
&&\tM^*(t,\tq(t),\QQ(k)) 
\in \arg \max_{M\in \cM_{\tq(t)}} \sum_{i,s} M_{is} \left(\frac{c_{\tq(t) is}}{\epsilon}+Q_i(k)\right). \label{eq: dyn_alg_choose_mat2}
\end{eqnarray}

At the end of requirement cycle $k$, for each client $i$, update
$$Q_i(k+1) = \nonnegpart{Q_i(k) + \tm(k) - S_i(k,\QQ(k),\qq(k))}, 
$$
where
\begin{equation}
S_i(k,\QQ(k),\qq(k)) \triangleq
    \sum_{t = kN}^{kN+N-1} \sum_{s} [\tM^*(t, \tq(t), \QQ(k))]_{is}.  \label{eq: virtual_depart}
\end{equation}

\noindent\hrulefill\\

In real online advertising business, some clients may only have short-term contracts, 
i.e., clients may not be interested in the average number of impressions per time slot but may be interested in a minimum number of impressions in a given duration (such as a day). Further, query arrivals may not form a stationary process. In fact, they are more likely to vary depending on the time of day. These extensions are considered in Appendix E. Such extensions also make sense for the revenue maximization model considered in the previous sections, but the approach is similar to Appendix E and so will not be considered here.

\subsection{Performance Evaluation} 
\label{sec: perf_model2}

$S_i(k,\QQ(k),\qq(k))$ defined in \eqref{eq: virtual_depart} represents the actual number of impressions for client $i$'s ads during requirement cycle $k$. The queue length increases when the average impression requirements in a particular requirement cycle cannot be fulfilled. Hence, a positive queue represents accumulated ``credits,'' which enhances the chance of being assigned with a webpage slot in the future, much like a negative queue in the revenue maximization problem. 
We thus call this queue a ``{\em credit queue}.'' 



Unlike the revenue maximization problem in which an $O(1/\epsilon)$ upper bound on the transient queue length is automatically imposed by the online algorithm, here we need to prove the stability of the queues and show an upper bound on the mean queue length. Since $\{\QQ(k)\}$ defines an irreducible and aperiodic Markov chain, in order to prove its stability (positive recurrence), we will first bound the expected drift of $\QQ(k)$ for a suitable Lyapunov function.\\

\begin{lemma}
\label{lemma: bounded_drift_model2}
Consider the Lyapunov function $V(\QQ) = \frac{1}{2} \sum_{i} Q_i^2$. For any $\epsilon>0$ and each requirement cycle $k$,
\begin{eqnarray}
&&E[V(\QQ(k+1))|\QQ(k)=\QQ] - V(\QQ)
\le \frac{D_3}{\epsilon}  + D_1
    - D_2 \sum_i Q_i. \label{eq: neg_drift}
\end{eqnarray}
Here,
\begin{eqnarray}
D_1 \!\!\!&\triangleq&\!\!\! \frac{1}{2} \Big(N(N-1) L^2+NL
 + \sum_i \lceil m_i \rceil^2 (m_i- \lfloor m_i \rfloor) 
+ \lfloor m_i \rfloor^2 (1-m_i+\lfloor m_i \rfloor)
\Big), \label{eq: D1_def}
\end{eqnarray}
where $L$ is the number of webpage slots; 
\begin{equation}
D_2\triangleq \min_{i} \{N\sum_q \nu_q \sum_{M\in\cM_q} \hpqM \sum_{s} M_{is}-m_i\}, \label{eq: D2_def}
\end{equation}
for some $\hpp\in\mathcal{F}$ such that $D_2>0$
; and 
\begin{equation}
D_3 \triangleq N \cdot \max_{\pp\in \mathcal{F}_0}\avgJ(\pp) \label{eq: D3_def}
\end{equation} 
where $\mathcal{F}_0$ is defined in \eqref{eq: feasible_relaxed}. 
\hfill $\diamond$
\end{lemma}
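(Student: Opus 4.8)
The plan is to expand the one-step Lyapunov drift directly from the queue update rule and bound the cross term using the maximization step that defines the online algorithm. First I would write, for each client $i$, the pessimistic bound $Q_i(k+1) \le \nonnegpart{Q_i(k) + \tm_i(k) - S_i(k,\QQ(k),\qq(k))}$ and use the elementary inequality $([x]^+)^2 \le x^2$ to get
\begin{eqnarray*}
V(\QQ(k+1)) - V(\QQ) &\le& \frac{1}{2}\sum_i \big(\tm_i(k) - S_i(k,\QQ,\qq(k))\big)^2 \\
&& + \sum_i Q_i \big(\tm_i(k) - S_i(k,\QQ,\qq(k))\big).
\end{eqnarray*}
Taking conditional expectation given $\QQ(k)=\QQ$, the first (squared) term is a deterministic constant independent of $\QQ$: it contributes the $D_1$ term, where the $N(N-1)L^2 + NL$ piece comes from bounding $E[S_i^2]$ via the fact that at most $L$ impressions occur per time slot across $N$ slots (this is the standard ``sum-of-at-most-$L$-indicators over $N$ slots'' second-moment bound, expanded into the diagonal $NL$ and off-diagonal $N(N-1)L^2$ contributions), and the $\sum_i \lceil m_i\rceil^2(m_i-\lfloor m_i\rfloor) + \lfloor m_i\rfloor^2(1-m_i+\lfloor m_i\rfloor)$ piece is just $\sum_i E[\tm_i(k)^2]$ under the two-point distribution of $\tm_i(k)$.

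The heart of the argument is the linear cross term $\sum_i Q_i\, E\big[\tm_i(k) - S_i(k,\QQ,\qq(k)) \,\big|\, \QQ\big]$. Since $E[\tm_i(k)] = m_i$, I need an upper bound on $-\sum_i Q_i\, E[S_i(k,\QQ,\qq(k))\mid\QQ]$. Here I would invoke the defining property of the assignment step \eqref{eq: dyn_alg_choose_mat2}: in every time slot $t$, $\tM^*(t,q,\QQ)$ maximizes $\sum_{i,s} M_{is}\big(c_{qis}/\epsilon + Q_i\big)$ over $M \in \cM_q$. Summing over the $N$ slots of the cycle and taking expectation over the i.i.d.\ query process gives, for \emph{any} feasible randomized policy described by probabilities $\{\pqM\}\in\mathcal{F}_0$,
$$
\sum_i Q_i\, E[S_i(k,\QQ,\qq(k))\mid\QQ] + \frac{1}{\epsilon} N\sum_q\nu_q\!\!\sum_{M\in\cM_q}\!\!\tpqM^* \sum_{i,s} M_{is} c_{qis} \;\ge\; \sum_i Q_i\, \mu_i^{(\pp)} + \frac{N}{\epsilon}\avgJ(\pp),
$$
where $\mu_i^{(\pp)} = N\sum_q\nu_q\sum_{M\in\cM_q}\pqM\sum_s M_{is}$ is the mean per-cycle impression count under $\pp$. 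Rearranging, $-\sum_i Q_i E[S_i\mid\QQ] \le -\sum_i Q_i \mu_i^{(\pp)} + \frac{1}{\epsilon}\big(N\avgJ(\tpp^*) - N\avgJ(\pp)\big)$, and since $\avgJ(\tpp^*) \le \max_{\pp\in\mathcal{F}_0}\avgJ(\pp)$ while $\avgJ(\pp)\ge 0$, the parenthetical is at most $N\max_{\pp\in\mathcal{F}_0}\avgJ(\pp) = D_3$, giving the $D_3/\epsilon$ term.

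Finally I would \emph{choose} the comparison policy $\pp = \hpp$ to be the strictly-feasible point whose existence is guaranteed by condition \eqref{eq: large_N} (the same point used to define $D_2$), so that $\mu_i^{(\hpp)} \ge m_i + D_2$ for every $i$, with $D_2 > 0$. Then $\sum_i Q_i(m_i - \mu_i^{(\hpp)}) \le -D_2\sum_i Q_i$ because $Q_i \ge 0$, and collecting the three contributions yields exactly \eqref{eq: neg_drift}. The main obstacle I anticipate is purely bookkeeping: getting the constant $D_1$ exactly right requires care in the second-moment expansion of $S_i$ — one must note that within a single time slot at most one webpage slot is assigned per client (so the within-slot contribution is a single $0/1$ variable, not a sum over $s$), while across the $N$ distinct slots the impression counts are bounded by $L$ each, producing the asymmetric $N(N-1)L^2 + NL$ coefficient rather than a naive $(NL)^2$; no conceptual difficulty, but it is the step most prone to off-by-a-factor errors. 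A secondary subtlety is justifying that the slot-by-slot maximization can legitimately be compared against a \emph{fixed} randomized policy over the whole cycle — this follows because the query arrivals are i.i.d.\ and $\QQ(k)$ is held constant throughout the cycle, so the per-slot optimality inequalities simply add up in expectation.
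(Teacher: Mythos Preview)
Your proposal is correct and follows essentially the same route as the paper's proof (Appendix~\ref{sec: proof_bounded_drift_model2}, which mirrors Appendix~\ref{sec: proof_bounded_drift2}): expand the drift via $([x]^+)^2\le x^2$, bound the second-moment term by $D_1$, use the per-slot optimality of \eqref{eq: dyn_alg_choose_mat2} to compare against an arbitrary $\hpp\in\mathcal{F}$, and then crudely bound $-\frac{N}{\epsilon}\big(\avgJ(\hpp)-\avgJ(\tpp^*(k,\QQ))\big)\le \frac{N}{\epsilon}\max_{\pp\in\mathcal{F}_0}\avgJ(\pp)=D_3/\epsilon$. Two cosmetic points: the queue update is an equality, not a ``pessimistic bound''; and the existence of a strictly feasible $\hpp$ here is guaranteed by $\mm\in\cC$ with $\mathcal{F}$ nondegenerate (as in the Remark following the lemma), not by condition~\eqref{eq: large_N}, which belongs to the revenue model.
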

\

The proof is similar to the proof of Lemma \ref{lemma: bounded_drift2} with some modifications in the final steps, which will be given briefly in Appendix~\ref{sec: proof_bounded_drift_model2}. With this lemma, we can conclude that $\QQ(k)$ is positive recurrent because the expected Lyapunov drift is negative except for a finite set of values of $\QQ(k)$, according to Foster-Lyapunov theorem (\cite{asmussen03AP_que, MeynTweedie93markov}). \\

\begin{remark}
Note that compared to the definition of $B_2$ in \eqref{eq: B2_def} of Lemma \ref{lemma: bounded_drift2} where $B_2\ge 0$, $D_2$ needs to be strictly positive in order to prove the stability of queues. Such a $\hpp$ in the definition of $D_2$ can always be found unless $\mathcal{F}$ is a degenerate set with at most one element. \hfill$\diamond$
\end{remark}

The stability of the queues directly implies the following corollary: 
\begin{corollary}[Overservices in the long term]
\label{cor: long-run obligation}
$$\lim_{K\rightarrow\infty} E\left[\frac{1}{K}\sum_{k=1}^{K} S_i(k,\QQ(k),\qq(k))\right] \ge m_i,~~\forall i.$$
\hfill $\diamond$
\end{corollary}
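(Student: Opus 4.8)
The plan is to derive Corollary~\ref{cor: long-run obligation} directly from the positive recurrence of $\{\QQ(k)\}$ established via Lemma~\ref{lemma: bounded_drift_model2}, by summing the queue update recursion over a budgeting horizon and letting the horizon go to infinity. First I would fix a client $i$ and write the (exact, not truncated) evolution inequality $Q_i(k+1) \ge Q_i(k) + \tm(k) - S_i(k,\QQ(k),\qq(k))$, which holds because the nonnegative-part operation only ever increases the right-hand side. Rearranging gives $S_i(k,\QQ(k),\qq(k)) \ge \tm(k) + Q_i(k) - Q_i(k+1)$. Summing from $k=0$ (or $k=1$) to $k=K-1$, the $Q_i$ terms telescope, leaving $\sum_{k=0}^{K-1} S_i(k,\QQ(k),\qq(k)) \ge \sum_{k=0}^{K-1} \tm(k) + Q_i(0) - Q_i(K)$.

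Next I would take expectations, divide by $K$, and pass to the limit $K\to\infty$. Since $E[\tm(k)] = m_i$ for every $k$ by construction of the randomized rounding of $m_i$, the term $\frac{1}{K}\sum_{k=0}^{K-1} E[\tm(k)] = m_i$. The term $\frac{1}{K}E[Q_i(0)]$ vanishes because $Q_i(0)$ is a fixed (or finitely-distributed) initial condition. The only real content is showing $\frac{1}{K}E[Q_i(K)] \to 0$: this is where positive recurrence is used. Because Lemma~\ref{lemma: bounded_drift_model2} gives a negative Lyapunov drift outside a finite set, the Foster–Lyapunov criterion yields not only positive recurrence but also (via the standard moment bound obtained by summing the drift inequality) $\sup_K E[\sum_i Q_i(K)] < \infty$, i.e.\ the stationary expected queue length is finite and the chain's marginals have uniformly bounded first moments when started from a fixed state. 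Hence $E[Q_i(K)]$ is bounded uniformly in $K$, so $\frac{1}{K}E[Q_i(K)] \to 0$, giving $\liminf_{K\to\infty} \frac{1}{K}\sum_{k=0}^{K-1} E[S_i(k,\QQ(k),\qq(k))] \ge m_i$, which is the claim.

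The main obstacle — and the only step deserving care — is justifying $\frac{1}{K}E[Q_i(K)]\to 0$ rigorously, i.e.\ converting the one-step drift bound of Lemma~\ref{lemma: bounded_drift_model2} into a uniform-in-$K$ bound on $E[\sum_i Q_i(K)]$. The clean way is: take expectations in \eqref{eq: neg_drift} with the tower property, telescope $E[V(\QQ(K))] - E[V(\QQ(0))] \le K(D_3/\epsilon + D_1) - D_2 \sum_{k=0}^{K-1} E[\sum_i Q_i(k)]$, and rearrange to get $\frac{1}{K}\sum_{k=0}^{K-1} E[\sum_i Q_i(k)] \le (D_3/\epsilon + D_1)/D_2 + E[V(\QQ(0))]/(K D_2)$, which is bounded; combined with a short argument (e.g.\ using that the increments of $Q_i$ are bounded by a constant $N L$ plus $\lceil m_i\rceil$, so $V(\QQ(K))$ cannot be much larger than a Cesàro-typical value) this forces $E[Q_i(K)]/K\to 0$. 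Alternatively, one simply invokes that positive recurrence plus a Lyapunov function with bounded-increment drift implies the time-average and the stationary expectation coincide and are finite (a standard ergodic-theorem consequence, e.g.\ \cite{MeynTweedie93markov}), after which $\frac{1}{K}E[Q_i(K)]\to 0$ is immediate. I would present the telescoping-and-rearranging version since it is self-contained and mirrors the bookkeeping already used in the proof of Theorem~\ref{thm: revenue_converge}.
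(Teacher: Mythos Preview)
Your proposal is correct and is precisely the standard argument the paper has in mind: the paper's own proof is the single sentence ``the stability of the queues directly implies the following corollary,'' and your telescoping of the queue recursion together with $E[Q_i(K)]/K\to 0$ (from the Foster--Lyapunov drift bound of Lemma~\ref{lemma: bounded_drift_model2}) is exactly how one unpacks that implication. Your bounded-increment argument to pass from the Ces\`aro bound on $E[\sum_i Q_i(k)]$ to $E[Q_i(K)]/K\to 0$ is a nice self-contained touch that the paper omits.
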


In addition to proving stability, Lemma \ref{lemma: bounded_drift_model2} will be used to evaluate the upper bound on the expected total queue length in the steady state, as shown in the following theorem: \\

\begin{theorem}
\label{thm: qlen_ub_model2}
Under the online algorithm,
\begin{equation}
E\left[\sum_{i} Q_i(\infty)\right] \le \frac{1}{D^*_2}\left(D_1 + \frac{D_3}{\epsilon}\right),
\label{eq: qlen_ub_model2}
\end{equation}
where $D_1$ and $D_3$ are respectively defined in \eqref{eq: D1_def} and \eqref{eq: D3_def}
; $D^*_2$ is defined as
\begin{equation}
D^*_2 \triangleq \max_{\pp\in \mathcal{F}_0} D_2(\pp). \label{eq: max_D2}
\end{equation}  
where $D_2$ is defined in \eqref{eq: D2_def} (regarded as a function of $\pp$).
\hfill $\diamond$
\end{theorem}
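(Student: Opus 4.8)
The plan is to derive the bound from the Lyapunov drift inequality in Lemma~\ref{lemma: bounded_drift_model2} by the standard ``drift argument'' used to get steady-state moment bounds from negative drift. First I would observe that by Corollary~\ref{cor: long-run obligation} (positive recurrence) the Markov chain $\{\QQ(k)\}$ has a stationary distribution, so we may take $\QQ(k)$ distributed according to this stationary distribution and write $\QQ(k+1)$ for the one-step evolution. Under stationarity, $E[V(\QQ(k+1))] = E[V(\QQ(k))]$, both finite (finiteness of the second moment needs a brief justification, e.g.\ by iterating the drift bound or invoking a Foster--Lyapunov moment bound, since the drift is not just negative but linearly decreasing in $\sum_i Q_i$ outside a finite set).

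Next I would take expectations of the drift inequality in Lemma~\ref{lemma: bounded_drift_model2} over the stationary distribution. This gives
\begin{equation*}
0 = E[V(\QQ(k+1))] - E[V(\QQ(k))] \le \frac{D_3}{\epsilon} + D_1 - D_2 \sum_i E[Q_i(\infty)],
\end{equation*}
and rearranging yields $\sum_i E[Q_i(\infty)] \le \frac{1}{D_2}\left(D_1 + \frac{D_3}{\epsilon}\right)$. The only remaining point is that Lemma~\ref{lemma: bounded_drift_model2} holds for the particular $\hpp\in\mathcal{F}$ used to define $D_2$, but any $\hpp\in\mathcal{F}_0$ with $D_2(\hpp)>0$ gives a valid drift bound of the same shape (since $B_1$-type and $D_3$-type constants do not depend on $\hpp$). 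Hence we are free to optimize over $\hpp$ and replace $D_2$ by $D^*_2 = \max_{\pp\in\mathcal{F}_0} D_2(\pp)$, giving the stated bound~\eqref{eq: qlen_ub_model2}. I would present the dependence of the drift on $\hpp$ carefully: re-examine the proof of Lemma~\ref{lemma: bounded_drift_model2} to confirm that the term $-D_2\sum_i Q_i$ is exactly $-\sum_i Q_i \cdot (N\sum_q\nu_q\sum_{M}\hpqM\sum_s M_{is} - m_i)$ before taking the $\min_i$, so that each choice of $\hpp$ yields its own valid inequality.

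The main obstacle I anticipate is the justification that $E[V(\QQ(\infty))] < \infty$, i.e.\ that taking expectations in the telescoping/stationarity step is legitimate rather than formally manipulating possibly-infinite quantities. This is handled by the usual argument: from Lemma~\ref{lemma: bounded_drift_model2}, for $\sum_i Q_i$ large enough the drift is bounded above by a negative constant, so $\sum_i Q_i$ has finite stationary mean (Foster--Lyapunov / Kaplan's condition), and a second application of the drift bound with the now-known finite $E[\sum_i Q_i(\infty)]$ controls $E[V(\QQ(\infty))]$; alternatively one can run the drift argument over a finite horizon $K$, average, and let $K\to\infty$ using positive recurrence and monotone/dominated convergence. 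Everything else is routine rearrangement, so the proof can be kept short, with the finiteness argument and the ``optimize over $\hpp$'' observation being the two points that deserve explicit mention.
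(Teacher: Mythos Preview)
Your proposal is correct and essentially matches the paper's argument. The paper actually uses what you list as your \emph{alternative} route: it averages the drift inequality~\eqref{eq: neg_drift} over $0\le k\le K-1$, telescopes the Lyapunov terms, lets $K\to\infty$, and then invokes the ergodic theorem (Theorem~15.0.1 in \cite{MeynTweedie93markov}) to identify $\limsup_{K\to\infty}\frac{1}{K}\sum_{k=0}^{K-1}E[\sum_i Q_i(k)]$ with $E[\sum_i Q_i(\infty)]$; this sidesteps the finiteness-of-$E[V(\QQ(\infty))]$ issue you flagged. The final ``optimize over $\hpp\in\mathcal{F}_0$ to replace $D_2$ by $D_2^*$'' step is identical to yours.
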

\

\begin{proof}
Averaging both sides of inequality \eqref{eq: neg_drift} over $0\le k\le K-1$, taking $K\rightarrow\infty$ and doing some simple algebra, one obtains
$$
\limsup_{K\rightarrow\infty} \frac{1}{K}\sum_{k=0}^{K-1} E\left[\sum_{i} Q_i(k)\right] \le \frac{1}{D_2}\left(D_1 + \frac{D_3}{\epsilon}\right).
$$
The LHS equals to $E\left[\sum_{i} Q_i(\infty)\right]$ according to Theorem 15.0.1 in \cite{MeynTweedie93markov}. The RHS is minimized through maximizing $D_2$ over all $\pp \in \mathcal{F}_0$ (which will certainly satisfy $\pp \in \mathcal{F}$ and $D_2 >0$). This completes our proof. 
\end{proof}
\

The following theorem shows that the online algorithm proposed above achieves a long-term average click-through rate within $O(\epsilon)$ of the offline optimum. The proof is similar to the one for Theorem \ref{thm: revenue_converge} and hence will be omitted. 
\begin{theorem}
\label{thm: ctr_converge}
For any $\epsilon >0$,
$$
0 \le \lim_{K\rightarrow \infty} E\left[ \bar{J}(\pp^*) - \frac{1}{K\!N}\sum_{k=0}^{K-1} J(k) \right] \le \frac{D_1\epsilon}{N},
$$
for some constant $D_1 > 0$ (defined in \eqref{eq: D1_def} in Lemma~\ref{lemma: bounded_drift_model2}). Here, $J(k)$ is defined as the total number of click-through events within requirement cycle $k$.\hfill $\diamond$
\end{theorem}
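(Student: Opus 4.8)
The plan is to mirror exactly the argument used for Theorem~\ref{thm: revenue_converge}, now driven by the drift bound of Lemma~\ref{lemma: bounded_drift_model2} instead of Lemma~\ref{lemma: bounded_drift2}. The starting point is the observation that the online algorithm's matrix choice \eqref{eq: dyn_alg_choose_mat2} maximizes $\sum_{i,s} M_{is}(c_{\tq(t)is}/\epsilon + Q_i(k))$ over $\cM_{\tq(t)}$, so in particular it does at least as well as the (randomized) choice induced by the offline optimal $\pp^*$. Tracing through the proof of Lemma~\ref{lemma: bounded_drift_model2}, this comparison should yield a strengthened drift inequality of the form
\begin{equation*}
E[V(\QQ(k+1))\mid\QQ(k)=\QQ] - V(\QQ) \le \frac{N}{\epsilon}\bigl(\avgJ(\tpp^*(k,\QQ)) - \avgJ(\pp^*)\bigr) + D_1 - D_2\sum_i Q_i,
\end{equation*}
where $\tpp^*(k,\QQ)$ is the (deterministic) policy actually used in cycle $k$, and $J(k)$ has conditional mean $N\avgJ(\tpp^*(k,\QQ))$ given $\QQ(k)$. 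The first step is therefore to re-derive this inequality carefully; it is the analogue of the refinement that upgrades the crude bound $D_3/\epsilon$ to the revenue-gap term, and it uses that the $\pp^*$-induced policy is feasible for the per-query maximization in \eqref{eq: dyn_alg_choose_mat2}.

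Next I would take conditional expectations, then full expectations, sum over $k = 0,\dots,K-1$, and telescope the left-hand side. Because $V\ge 0$ and $V(\QQ(0))$ is finite (deterministic initialization), telescoping gives
\begin{equation*}
0 \le E[V(\QQ(K))] \le V(\QQ(0)) + \frac{N}{\epsilon}\sum_{k=0}^{K-1}\bigl(E[\avgJ(\tpp^*(k,\QQ(k)))] - \avgJ(\pp^*)\bigr) + K D_1 - D_2\sum_{k=0}^{K-1}E\Bigl[\sum_i Q_i(k)\Bigr].
\end{equation*}
Dropping the nonnegative queue term, rearranging, dividing by $KN$, and using $E[J(k)\mid\QQ(k)] = N\,\avgJ(\tpp^*(k,\QQ(k)))$ so that $\frac{1}{KN}\sum_k E[J(k)] = \frac{1}{K}\sum_k E[\avgJ(\tpp^*(k,\QQ(k)))]$, one gets
\begin{equation*}
\avgJ(\pp^*) - \frac{1}{KN}\sum_{k=0}^{K-1}E[J(k)] \le \frac{\epsilon}{KN}\Bigl(V(\QQ(0)) + KD_1\Bigr).
\end{equation*}
Letting $K\to\infty$ kills the $V(\QQ(0))/K$ term and leaves the upper bound $D_1\epsilon/N$. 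The lower bound $0$ is immediate: $\avgJ(\pp^*)$ is the offline maximum over the feasible set, and the long-run average of $J(k)/N$ is the click-through rate of a feasible online policy (feasibility in the long run being exactly the content of Corollary~\ref{cor: long-run obligation}, which guarantees the impression constraints are asymptotically met), so it cannot exceed $\avgJ(\pp^*)$.

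The main obstacle — and the reason this is not entirely mechanical — is ensuring that the interchange of limit and expectation and the identification $\lim_K \frac1{KN}\sum_k E[J(k)] = \lim_K \frac1K \sum_k E[\avgJ(\tpp^*(k,\QQ(k)))]$ are legitimate. This rests on positive recurrence of $\{\QQ(k)\}$ (established via Lemma~\ref{lemma: bounded_drift_model2} and Foster--Lyapunov), finiteness of $E[\sum_i Q_i(\infty)]$ from Theorem~\ref{thm: qlen_ub_model2} so that the dropped term is genuinely $o(K)$ and the Cesàro averages converge, and uniform integrability of $J(k)$, which follows because $J(k)\le NL$ is bounded. Once these ergodic-averaging facts are in hand — all of which are already available from the preceding results — the rest is the same telescoping computation as in Theorem~\ref{thm: revenue_converge}, so the proof is indeed a routine adaptation and can be omitted, as the statement claims.
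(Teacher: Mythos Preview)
Your proposal is correct and matches the paper's intended argument exactly: the paper omits the proof, stating only that it parallels Theorem~\ref{thm: revenue_converge}, and your reconstruction is precisely that parallel (refined drift bound from the proof of Lemma~\ref{lemma: bounded_drift_model2} with $\hpp=\pp^*$, telescope, drop the nonnegative queue term, let $K\to\infty$). One minor labeling point: when you compare the online choice against $\pp^*$ in the max-weight step, the coefficient on $\sum_i Q_i$ becomes $\min_i\{N\sum_q\nu_q\sum_{M}\pqM^*\sum_s M_{is}-m_i\}\ge 0$, not the constant $D_2$ (which was defined for a different $\hpp$ chosen to make it strictly positive); since you drop this nonnegative term anyway, the bound is unaffected.
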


\subsection{Customizing Impression Requirements $\{m_i\}$ Based on Query Arrival Rates $\{\nu_q\}$} 
\label{sec: choose_impression} 
Since a positive queue measures how much the service provider ``owes'' a client, reducing the coefficient of the $1/\epsilon$ term in the upper bound on the mean queue length becomes important. Besides, we also need to guarantee $\mm\in \cC$. In order to handle these two issues, we introduce an approach to customizing $\{m_i\}$ based on known (or estimated) query arrival rates $\{\nu_q\}$,
 
Replacing $D^*_2$ in Theorem \ref{thm: qlen_ub_model2} by a common $D_2$ defined in equation~\eqref{eq: D2_def}, if we want the expected total queue length to be upper bounded by $Q_{max}$, it suffices to let  
\begin{equation}
D_2 \ge \xi \triangleq \frac{1}{Q_{max}}\left(D_1+\frac{D_3}{\epsilon}\right), \label{eq: bound_que_coef}
\end{equation}
where $D_3$ is already determined, and $D_1$ does not matter much given a small $\epsilon$ although it includes unknown $\{m_i\}$. We then solve the following optimization problem to determine $\{m_i\}$:
\begin{eqnarray*}
&&\max_{\pp\in \mathcal{F}_0, \mm} \sum_i \log m_i \\
&\mbox{s.t.}& N\sum_q \nu_q \sum_{M\in\cM_q} \pqM \sum_{s} M_{is}-m_i \ge \xi,~\forall i.
\end{eqnarray*}
Here we use $\sum_i \log m_i$ as the objective function in order to guarantee a unique optimal solution and impose a certain fairness rule called ``proportional fairness'' (see e.g. \cite{kelly1998rate}). Note that $\xi$ cannot be set too large (i.e., $Q_{max}$ cannot be set too small), otherwise there may not exist a feasible solution.

Naturally, a question would arise: now that we need to solve some mathematical programming like the above one based on knowledge of query arrival rates, why not also directly solve the original linear programming in \eqref{obj: max_click-rate} and use the offline optimal solution $\pp^*$ to assign ads? 
The answer to this is similar to the max-weight algorithm for wireless networks. In \cite{shakkottaiTAC08effective} and \cite{yingToIT06LargeDev}, it has been shown that adaptive algorithms lead to much better queueing performance compared to static offline algorithms. We verify this assertion in our context through simulations in the next subsection.

\subsection{Queue Update in a Faster Time Scale}
\label{sec: fast_que}
In the original algorithm, the queue length is updated only at the end of each requirement cycle and used in the max-weight matching for the next whole requirement cycle. The longer a requirement cycle lasts, the more obsolete the queue length information becomes, so with a large $N$, short-term performances may not be so good even if long-term performances are still guaranteed. 

We then propose a solution which updates queue lengths in a faster time scale. Specifically, we divide each requirement cycle into $T$ {\em queueing cycles} with equal lengths (assuming $N/T \in \mathcal{Z}^+$ without loss of generality). We use $\{\hQQ(k,\tau):~0\le \tau \le T\}_{k\ge 0}$ to denote this new queueing system and assume $\hQQ(-1,T) = \mathbf{0}$. At the beginning of each requirement cycle $k$ before any decision, update 
$$\hQQ(k,0) = \hQQ(k-1,T) + \mathbf{\tm}(k),$$ 
and at the end of the $\tau^{\rm th}$ queueing cycle within this requirement cycle ($1\le \tau \le T$), for all client $i$,
\begin{eqnarray*}
&&\hQ_i(k,\tau)
 =\nonnegpart{\hQ_i(k,\tau\!-\!1) - \!\!\!\!\sum_{t = kN+(\tau-1)\frac{N}{T}}^{kN+\tau\frac{N}{T}-1} \sum_{s} [\tM^*(t, \tq(t), \hQQ(k,\tau))]_{is}}\!\!\!\!.
\end{eqnarray*}
Since $\|\hQQ(k,T) - \QQ(k)\| \le B$ for some constant $B$ independent of the queue lengths, it can be shown that the long-term performances evaluated in Subsection \ref{sec: perf_model2} are still guaranteed (the idea behind such a proof would be similar to the one in \cite{eryilmazToN05stable} and so is omitted).

Next, we use simulations to compare three different algorithms, namely a randomized algorithm following the offline optimal solution (labeled as OPT) and two versions of our online algorithm ``max-weight matching'' with and without ``fast queue update'' respectively (labeled as MWM-Fast and MWM respectively). In each scenario we test, all the parameters are randomly generated. The impression requirements $\{m_i\}$ are  chosen through the approach in Subsection \ref{sec: choose_impression}. 

\begin{figure}[ht]
\centering
\subfigure[Over-Service]{
\includegraphics[width=4.1cm]{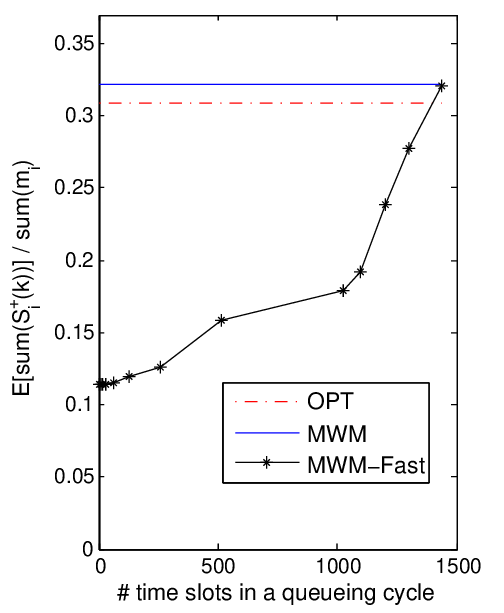}
\label{fig: overserv}
}
\subfigure[Under-Service]{
\includegraphics[width=4.1cm]{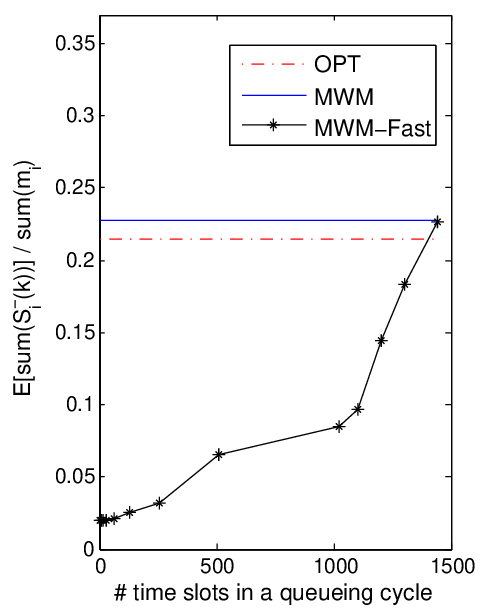}
\label{fig: underserv}
}
\caption{Average overall over-service and under-service (normalized by the total impression requirement) impacted by the ``fast queue update''}
\end{figure}

\begin{figure}[ht]
\centering
\subfigure[Over-Service]{
\includegraphics[width=4.1cm]{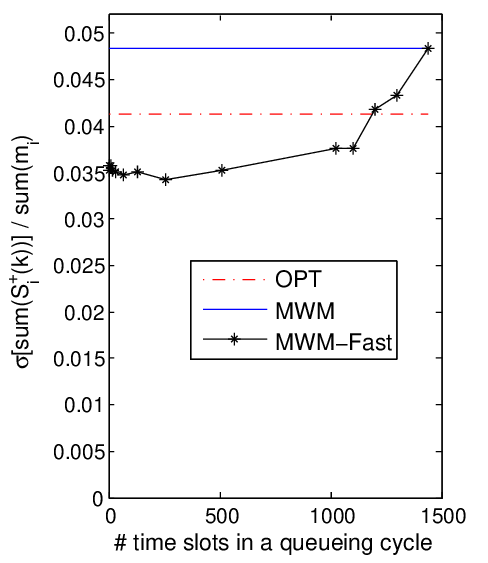}
\label{fig: overserv_std}
}
\subfigure[Under-Service]{
\includegraphics[width=4.1cm]{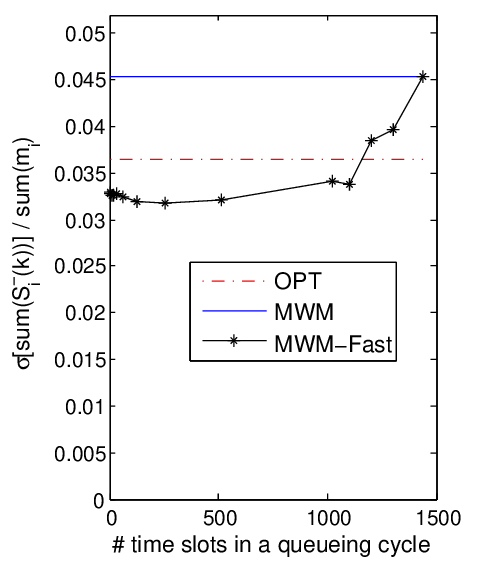}
\label{fig: underserv_std}
}
\caption{The standard variance of overall over-service and under-service (normalized by the total impression requirement) impacted by the ``fast queue update''}
\end{figure}

%

We take an example scenario with 2 webpage slots, 5 keywords and 10 clients. 
The probability that a query arrives in a time slot equals $0.7$. Specifically, for the five keywords, the query arrival rates are $\bnu = [0.2364, 0.0594, 0.1669, 0.0714, 0.1659]$. Table \ref{tab: ctr} shows the click-through rates for the ten clients ($C_1 \sim C_{10}$) corresponding to each keyword ($q_1 \sim q_5$), on webpage slots 1 and 2 respectively (a zero click-through rate indicates that the corresponding client is not related to this keyword). We use $N = 1440$ (say, one time slot is one minute and one requirement cycle is one day), $\epsilon = 10^{-4}$ and $Q_{max} = 20/\epsilon$ (recall that $Q_{max}$ is used to set up an upper bound on the mean queue length by the heuristic in  Subsection \ref{sec: choose_impression}). The simulation has been run for 1000 requirement cycles.

\begin{table}
\setlength{\tabcolsep}{2.5pt}
\begin{center}
  \begin{tabular}{|c||c|c | c | c |c| c | c |c| c | c |c| }
        \hline 
   &$C_1$&$C_2$&$C_3$&$C_4$&$C_5$&$C_6$&$C_7$&$C_8$&$C_9$&$C_{10}$\\
    \hline\hline
 &  \multicolumn{10}{|c|}{Webpage Slot 1}\\ \hline
    $q_1$ &  0 & 	0.519	&0.973&	0	&0.649&	0&	0&	0&	0.800&	0\\ \hline
$q_2$ & 0&	0&0&	0.340&	0	&0	&0.952&	0&	0&	0 \\ \hline
$q_3$ & 0.982&	0.645&	0.856&	0.461&	0.190&	0&	0.369&	0.669&	0.156&	0\\ \hline
$q_4$ & 0.423&	0&	0&	0&	0.599&	0&	0.179&	0&	0.471&	0.094 \\ \hline
$q_5$ & 0&	0&	0&	0.875&	0&	0.518&	0&0&	0&	0 \\
\hline \hline

   &  \multicolumn{10}{|c|}{Webpage Slot 2}\\
        \hline 
    $q_1$ &  0&	0.235&0.421&	0&	0.536&	0	&0	&0	&0.067	&0\\  \hline
    $q_2$ &  0&	0&	0&	0.312&0&	0&	0.050&0&	0&	0\\ \hline
    $q_3$ &  0.118	&0.248&	0.194& 0.222&	0.036&	0&	0.158&0.252&	0.092& 0 \\ \hline
    $q_4$ &  0.296	&0	&0	&0	&0.020	&0	&0.124&	0&	0.032&	0.060 \\ \hline
    $q_5$ &  0&	0&	0&	0.826&	0&	0.330&	0&	0&	0&	0\\
    \hline
\end{tabular}

\end{center}
 \caption{Click-through rates for all the clients' ads}
\label{tab: ctr}
\end{table}

To compare the performances of all the three algorithms, instead of considering the long-term performance requirements that we have used in the theory, we introduce two new metrics: over-service $S^+_i(k)\triangleq \nonnegpart{S_i(k) - \tm_i(k)}$ and under-service $S^-_i(k)\triangleq \nonnegpart{\tm_i(k)- S_i(k)}$ to client $i$ during requirement cycle $k$. 
Note that these metrics measure deviations from the guarantees over short time scales and so are more stringent requirements than the long-term guarantees used in the theory.

We show respectively in Figures \ref{fig: overserv} and \ref{fig: underserv} that the average overall over-service and under-service normalized by the total impression requirement, i.e., $E[\sum_i S^+_i (k)]/\sum_i m_i$ and $E[\sum_i S^-_i (k)]/\sum_i m_i$, 
are both reduced by the fast queue update.  Similarly, a ``variance reduction'' effect is shown by the fast queue update based on the statistics $\sqrt{var[\sum_i S^+_i(k)]} / \sum_i m_i$ and $\sqrt{var[\sum_i S^-_i(k)]} / \sum_i m_i$, respectively in Figures \ref{fig: overserv_std} and \ref{fig: underserv_std}. In  terms of the overall click-through rate, our simulation has verified that the three algorithms achieve approximately the same performance (the figure is omitted here) and further demonstrated in Figure \ref{fig: ctr_std} that the fast queue update can also reduce its variance. Note that these performances of each individual client also improve and we simply omit the figures here. 

Observed from Figure \ref{fig: que}, the offline optimal solution leads to very unstable queue dynamics. This essentially arises from the fact that the algorithm operates on an optimal point $\pp^*$ for which some inequalities in constraint \eqref{eq: obligation} may be tight. In contrast, our online algorithm guarantees the stability of queues, and the faster the queues update, the more stable the queue dynamics become (as an example we use $T=24$, i.e., the number of time slots per queueing cycle equals $60$). This is consistent with the above results which show a reduction of over-service and under-service in both mean and variance since these metrics directly measure the level of deviations around the equilibrium point of each stable queue. \\

\begin{remark}
While a long-term client may only be concerned with average performances, a short-term client cares about both mean (the average level for all the clients of its type) and variance (related to its own individual level), especially for the performances of under-service and click-through rate.\footnote{Over-service are cared about by the online ads service provider.} All of these are well handled by our online algorithm with fast queue updates. 
\end{remark}

\begin{figure}[t]
\centering
\includegraphics[width=8.7cm]{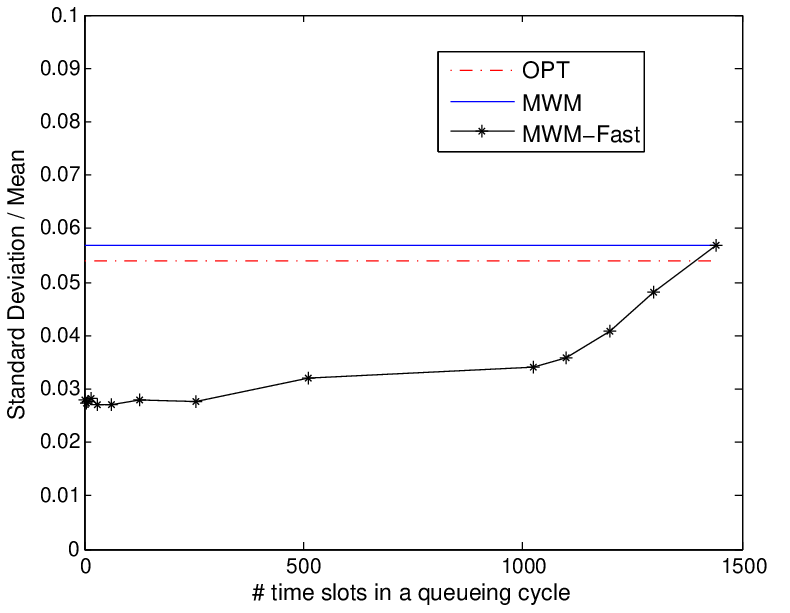}
\caption{The ``standard variance to mean ratio'' of overall click-through rate impacted by the ``fast queue update''} 
\label{fig: ctr_std}
\end{figure}

\begin{figure}[t]
\centering
\includegraphics[width=8.7cm]{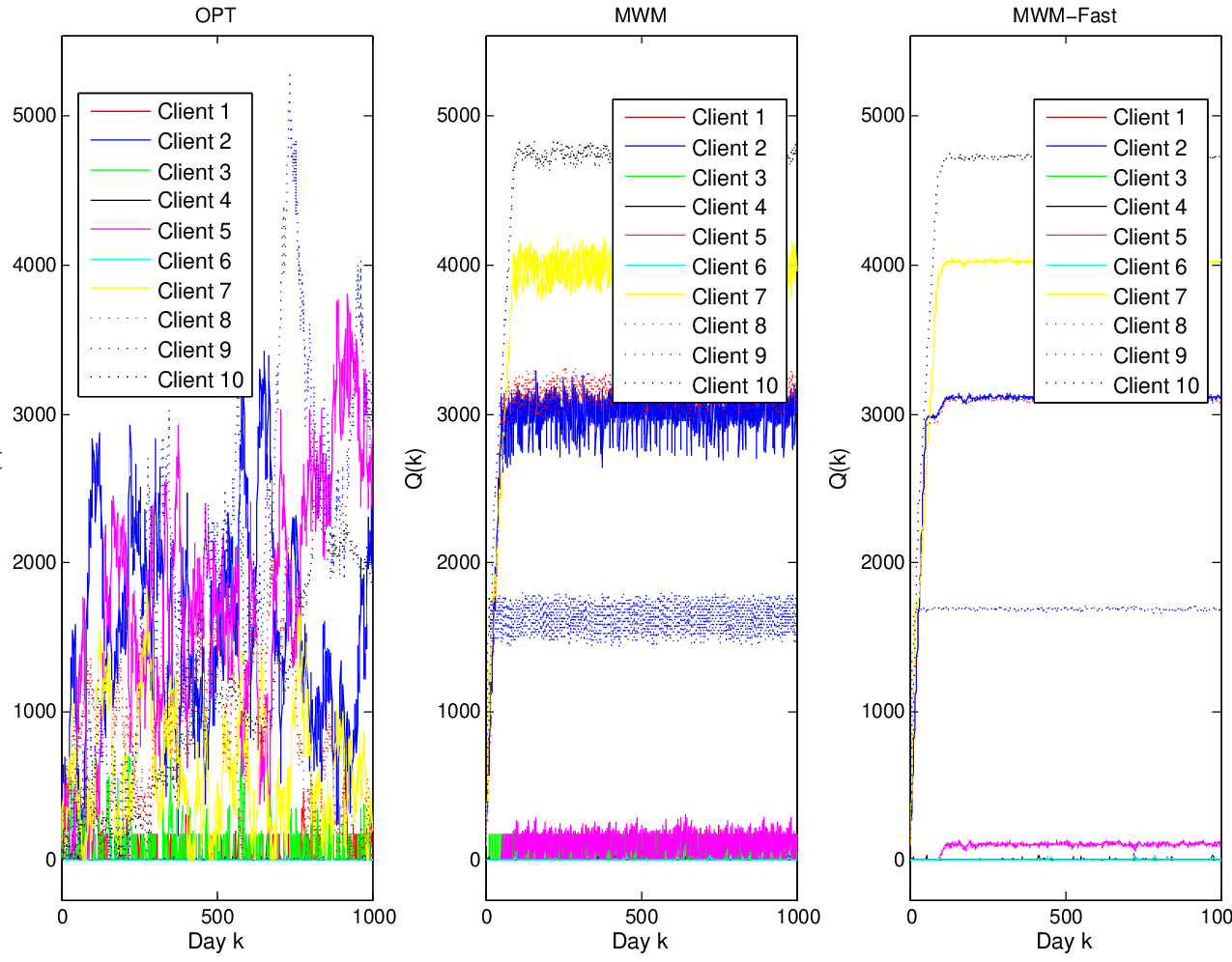}
\caption{Queue dynamics under three algorithms} 
\label{fig: que}
\end{figure}

\section{Conclusions}
\label{sec: conclusion}
In this paper, we propose a stochastic model to describe how search service providers charge client companies based on users' queries for the keywords related to these companies' ads by using certain advertisement assignment strategies. We formulate an optimization problem to maximize the long-term average revenue for the service provider under each client's long-term average budget constraint, and design an online algorithm which captures the stochastic properties of users' queries and click-through behaviors. We solve the optimization problem by making connections to scheduling problems in wireless networks, queueing theory and stochastic networks. Our online algorithm is entirely oblivious to query arrivals and fully adaptive, so even non-stationary query arrival patterns and short-term clients can be handled.

With a small customizable parameter $\epsilon$ which is the step size used in each iteration of the online algorithm, we have shown that our online algorithm achieves a long-term average revenue which is within $O(\epsilon)$ of the optimal revenue and the overdraft level of this algorithm is upper bounded by $O(1/\epsilon)$. By allowing negative values for the length of overdraft queues, we can eliminate overdraft.

When estimated click-through rates instead of true ones are used in our online algorithm, we show that the achievable fraction of the offline optimal revenue is lower bounded by $\frac{1-\Delta}{1+\Delta}$, where $\Delta$ is the relative error in click-through rate estimation.

We also show that in the long run, an expected overdraft level of $\Omega(\log(1/\epsilon))$ is unavoidable (a universal lower bound) under any stationary ad assignment algorithm which achieves a long-term average revenue within $O(\epsilon)$ of the offline optimum. The tightness of this universal lower bound is also shown for a simple queueing model using a threshold policy. 

In another optimization formulation where the objective is to maximize the long-term average click-through rate and the constraints include a minimum impression requirement for each client, we further propose an approach to set impression requirements which make the contract feasible and limit the average accumulated under-service to clients. 
Simulations show that making queues update in a faster time scale will reduce both over-service and under-service, which benefits a system involving short-term clients.


\bibliographystyle{abbrv}
\bibliography{adwords}

\begin{thebibliography}{10}

\bibitem{AgrawalWangYe_TR10}
S.~Agrawal, Z.~Wang, and Y.~Ye.
\newblock A dynamic near-optimal algorithm for online linear programming.
\newblock {\em Submitted to Mathematics of Operations Research}, 2010.

\bibitem{asmussen03AP_que}
S.~Asmussen.
\newblock {\em {Applied probability and queues}}.
\newblock Springer-Verlag, 2003.

\bibitem{BerGal_ToIT02_EnergyDelayTradeoff}
R.~A. Berry and R.~G. Gallager.
\newblock Communication over fading channels with delay constraints.
\newblock {\em IEEE Transactions on Information Theory}, 48(5):1135--1149, May
  2002.

\bibitem{borkar08stochastic}
V.~Borkar.
\newblock {\em {Stochastic approximation: a dynamical systems viewpoint}}.
\newblock Cambridge University Press, 2008.

\bibitem{BucJaiJos07_primal-dual}
N.~Buchbinder, K.~Jain, and J.~S. Naor.
\newblock Online primal-dual algorithms for maximizing ad-auctions revenue.
\newblock {\em In Proc. of the 15th annual European Conference on Algorithms
  (ESA), Lecture Notes in Computer Science (LNCS)}, 4698:253--264, 2007.

\bibitem{DevHay09_ads}
N.~R. Devenur and T.~P. Hayes.
\newblock The adwords problem: online keyword matching with budgeted bidders
  under random permutations.
\newblock In {\em Proc. of the 10th ACM Conference on Electronic Commerce
  (EC)}, pages 71--78, Stanford, CA, USA, 2009.

\bibitem{eryilmazToN05stable}
A.~Eryilmaz, R.~Srikant, and J.~Perkins.
\newblock Stable scheduling policies for fading wireless channels.
\newblock {\em IEEE/ACM Transactions on Networking}, 13(2):411--424, Apr. 2005.

\bibitem{FelHenKor10_ads}
J.~Feldman, M.~Henzinger, N.~Korula, V.~S. Mirrokni, and C.~Stein.
\newblock Online stochastic packing applied to display ad allocation.
\newblock {\em Algorithms--ESA, Lecture Notes in Computer Science},
  6346:182--194, 2010.

\bibitem{FelHenKor09_adsFreeDisposal}
J.~Feldman, N.~Korula, V.~Mirrokni, S.~Muthukrishnan, and M.~P\'{a}l.
\newblock Online ad assignment with free disposal.
\newblock {\em In the 5th Workshop on Internet and Network Economics (WINE),
  Lecture Notes in Computer Science (LNCS)}, 5929:374--385, Dec. 2009.

\bibitem{FelMut08_AlgSearchAuc}
J.~Feldman and S.~Muthukrishnan.
\newblock Algorithmic methods for sponsored search auctions.
\newblock {\em SIGMETRICS Tutorial. Chapter in Performance Modeling and
  Engineering}, 2008.

\bibitem{GeoNeeTas06_ResAllocWireless}
L.~Georgiadis, M.~J. Neely, and L.~Tassiulas.
\newblock Resource allocation and cross-layer control in wireless networks.
\newblock {\em Foundations and Trends in Networking}, 1(1):1--144, 2006.

\bibitem{huang11_LIFO}
L.~Huang, S.~Moeller, M.~Neely, and B.~Krishnamachari.
\newblock {LIFO-Backpressure} achieves near optimal utility-delay tradeoff.
\newblock In {\em Proc. of 9th Intl. Symposium on Modeling and Optimization in
  Mobile, Ad Hoc, and Wireless Networks (WiOpt)}, 2011.

\bibitem{IyeKum06_AdAuction}
G.~Iyengar and A.~Kumar.
\newblock Characterizing optimal adword auctions.
\newblock {\em Technical Report}, Nov. 2006.
\newblock \url{http://arxiv.org/abs/cs.GT/0611063}.

\bibitem{juasri10}
J.~J. Jaramillo and R.~Srikant.
\newblock Optimal scheduling for fair resource allocation in ad hoc networks
  with elastic and inelastic traffic.
\newblock In {\em Proc. of IEEE INFOCOM}, San Diego, CA, USA, Mar. 2010.

\bibitem{KalPruhs00_b-matching}
B.~Kalyanasundaram and K.~Pruhs.
\newblock {An optimal deterministic algorithm for online b-matching}.
\newblock {\em Theoretical Computer Science}, 233(1--2):319--325, 2000.

\bibitem{KarpVaz_STOC90_matching}
R.~Karp, U.~Vazirani, and V.~Vazirani.
\newblock An optimal algorithm for on-line bipartite matching.
\newblock In {\em Proc. of the 22nd Annual ACM Symposium on Theory of Computing
  (STOC)}, pages 352--358, 1990.

\bibitem{kelly1998rate}
F.~Kelly, A.~Maulloo, and D.~Tan.
\newblock Rate control for communication networks: shadow prices, proportional
  fairness and stability.
\newblock {\em The Journal of the Operational Research Society},
  49(3):237--252, 1998.

\bibitem{LinShrSri06_OptimizeWireless}
X.~Lin, N.~B. Shroff, and R.~Srikant.
\newblock A tutorial on cross-layer optimization in wireless networks.
\newblock {\em IEEE Journal on Selected Areas in Communications},
  24(8):1452--1463, Aug. 2006.

\bibitem{MahNazSab07_AdEstimate}
M.~Mahdian, H.~Nazerzadeh, and A.~Saberi.
\newblock {Allocating online advertisement space with unreliable estimates}.
\newblock In {\em Proc. of the 8th ACM conference on Electronic Commerce (EC)},
  pages 288--294, 2007.

\bibitem{mehsabume_FOCS05}
A.~Mehta, A.~Saberi, U.~V. Vazirani, and V.~Vazirani.
\newblock Adwords and generalized online matching.
\newblock In {\em Proc. of the 46th Annual IEEE Symposium on Foundations of
  Computer Science (FOCS)}, pages 264--273, 2005.

\bibitem{mehsabume_JACM07}
A.~Mehta, A.~Saberi, U.~V. Vazirani, and V.~Vazirani.
\newblock Adwords and generalized online matching.
\newblock {\em Journal of the ACM}, 54(5), Oct. 2007.
\newblock Article No. 22.

\bibitem{MenAsuSri09_DynAdAuction}
I.~Menache, A.~Ozdaglar, R.~Srikant, and D.~Acemoglu.
\newblock Dynamic online-advertising auctions as stochastic scheduling.
\newblock In {\em Proc. of the Workshop on the Economics of Networks, Systems
  and Computation (NetEcon)}, Stanford, CA, USA, Jul. 2009.

\bibitem{MeynTweedie93markov}
S.~Meyn and R.~Tweedie.
\newblock {\em {Markov chains and stochastic stability}}.
\newblock Springer-Verlag, London, 1993.

\bibitem{neely2007optimal}
M.~J. Neely.
\newblock {Optimal energy and delay tradeoffs for multiuser wireless
  downlinks}.
\newblock {\em IEEE Transactions on Information Theory}, 53(9):3095--3113,
  2007.

\bibitem{Neely_TAC09_EnergyDelayTradeoff}
M.~J. Neely.
\newblock {Intelligent packet dropping for optimal energy-delay tradeoffs in
  wireless downlinks}.
\newblock {\em IEEE Transactions on Automatic Control}, 54(3):565--579, Mar.
  2009.

\bibitem{shah_Allerton08_LB}
D.~Shah and D.~Wischik.
\newblock {Lower bound and optimality in switched networks}.
\newblock In {\em Proc. of 46th Annual Allerton Conference on Communication,
  Control, and Computing}, pages 1262--1269, 2008.

\bibitem{shakkottaiTAC08effective}
S.~Shakkottai.
\newblock {Effective capacity and QoS for wireless scheduling}.
\newblock {\em IEEE Transactions on Automatic Control}, 53(3):749--761, Apr.
  2008.

\bibitem{tansri_adwords_CDC11}
B.~R. Tan and R.~Srikant.
\newblock Online advertisement, optimization and stochastic networks.
\newblock In {\em Proc. of IEEE Conference on Decision and Control (CDC)}, Dec.
  2011.

\bibitem{yingToIT06LargeDev}
L.~Ying, R.~Srikant, A.~Eryilmaz, and G.~Dullerud.
\newblock A large deviations analysis of scheduling in wireless networks.
\newblock {\em IEEE Transactions on Information Theory}, 52(11):5088--5098,
  Nov. 2006.

\end{thebibliography}

\appendix
\subsection{Proof of Lemma~\ref{lemma: bounded_drift2}}
\label{sec: proof_bounded_drift2}
\begin{eqnarray}
&&E[V(\QQ(k+1))|\QQ(k)=\QQ] - V(\QQ) \nonumber\\
&=& \frac{1}{2} E\left[\sum_i \left( \nonnegpart{Q_i+A_i(k,\QQ,\uu(k))-\tb_i(k)}\right)^2-Q_i^2\right]  \nonumber\\
&\le&  \frac{1}{2} E\left[\sum_i \left(Q_i+A_i(k,\QQ,\uu(k))-\tb_i(k)\right)^2 - Q_i^2\right]\nonumber\\
&=& E\Bigg[\sum_i Q_i \left(A_i(k,\QQ,\uu(k))-\tb_i(k)\right) 
+ \frac{1}{2}\sum_i \left(A_i(k,\QQ,\uu(k)) - \tb_i(k)\right)^2\Bigg]\nonumber \\
&\le& \sum_i Q_i \left(\lambda_i(k,\QQ)-b_i\right) 
+ \frac{1}{2} \sum_i ( E[A_i^2(k,\QQ,\uu(k))] + E[\tb_i^2(k)] ), \label{eq: bound1}
\end{eqnarray}
where it was already defined in equation~(\ref{eq: virtual_arr}) that for all $i$,
\begin{eqnarray*}
&&A_i(k,\QQ(k),\uu(k)) 
=
    \sum_{t = kN}^{kN+N-1} \sum_{s} [\tM^*(t, \tq(t), \QQ(k))]_{is}\cdot \tc_{\tq(t)is}(t) \cdot r_{\tq(t)i}.
\end{eqnarray*}
and we further define
\begin{eqnarray*}
\lambda_i(k,\QQ(k)) &\triangleq &E[A_i(k,\QQ(k),\uu(k))|\QQ(k)]
= N\sum_q \nu_q \sum_{s} [\tM^*(q,t,\QQ(k))]_{is}c_{qis}r_{qi}.
\end{eqnarray*}
Since each client can at most get one webpage slot for each query, we can further bound
$$
\sum_i \!\!A_i^2(k,\QQ,\uu(k)) \!\le \!(N(N-1)L^2+N\!L) (\arg\max_{q,i,s}\{c_{qis} r_{qi}\})^2.$$
Besides,
\begin{eqnarray*}
E[b_i^2(k)] &=& \lceil b_i \rceil^2 \varrho_i + \lfloor b_i \rfloor^2 (1-\varrho_i)
= \lceil b_i \rceil^2 (b_i- \lfloor b_i \rfloor) + \lfloor b_i \rfloor^2 (1-b_i+\lfloor b_i \rfloor).
\end{eqnarray*}
Thus, by defining
\begin{equation}
B_1 \triangleq \frac{1}{2}\Big(
(N(N-1)L^2+N\!L) (\arg\max_{q,i,s}\{c_{qis} r_{qi}\})^2+ \sum_i \lceil b_i \rceil^2 (b_i- \lfloor b_i \rfloor)
+ \lfloor b_i \rfloor^2 (1-b_i+\lfloor b_i \rfloor)
\Big),\nonumber
\end{equation}
%
%
and continuing from inequality~(\ref{eq: bound1}), we have
\begin{eqnarray}
&&E[V(\QQ(k+1))|\QQ(k)=\QQ] - V(\QQ) \nonumber\\
&\le& N\sum_q\nu_q \sum_{i,s} Q_i [\tM^*(q,t,\QQ)]_{is}c_{qis}r_{qi}  
- \sum_i Q_i b_i + B_1  \nonumber\\
&=& -N\sum_q\nu_q \sum_{i,s} \left(\frac{1}{\epsilon} - Q_i\right)[\tM^*(q,t,\QQ)]_{is}c_{qis}r_{qi} \nonumber\\
&&
+ \frac{N}{\epsilon} \sum_q \nu_q \sum_{i,s}
         [\tM^*(q,t,\QQ)]_{is}c_{qis}r_{qi}  
         + B_1 - \sum_i Q_i b_i   \nonumber\\
&=& -N\sum_q\nu_q \sum_{i,s} \left(\frac{1}{\epsilon} - Q_i\right)[\tM^*(q,t,\QQ)]_{is}c_{qis}r_{qi} \nonumber\\
&&
+ \frac{N}{\epsilon} \bar{R}(\tpp^*(k,\QQ)) + B_1 - \sum_i Q_i b_i \label{eq: bound2}\\
&\stackrel{(a)}{\le}& - N\sum_q\nu_q \sum_{i,s} \left(\frac{1}{\epsilon} - Q_i\right)\sum_{M\in \cM_q} \pqM^* M_{is}c_{qis}r_{qi}
+ \frac{N}{\epsilon} \bar{R}(\tpp^*(k,\QQ)) + B_1 - \sum_i Q_i b_i   \nonumber\\
&=& - \frac{N}{\epsilon} \left(\bar{R}(\pp^*) - \bar{R}(\tpp^*(k,\QQ)) \right) + B_1\nonumber\\&&- \sum_i Q_i \cdot     
\left(b_i - N\sum_q \nu_q \sum_{M\in\cM_q} \pqM^* \sum_{s} M_{is}c_{qis}r_{qi}\right), \label{eq: bound3}
\end{eqnarray}
where inequality (a) holds because equation (\ref{eq: dyn_alg_choose_mat}) in the online algorithm is equivalent to 
\begin{eqnarray}
&\forall q,&\tpp_q^*(k,\QQ(k)) 
\in \arg \max_{\{\pqM,\atop M\in \cM_q\}}
    \sum_{M\in\cM_q} \pqM \sum_{i,s} M_{is}c_{qis}r_{qi}\left(\frac{1}{\epsilon}-Q_i(k)\right),
 \nonumber\\   
\label{eq: dyn_max_prob_form}
\end{eqnarray}
which means that evaluating the objective function in (\ref{eq: dyn_max_prob_form}) with $\pp = \pp^*$ cannot achieve a larger value.
Letting
\begin{equation}
B_2\triangleq \min_{i} \{b_i - N\sum_q \nu_q \sum_{M\in\cM_q} \pqM^* \sum_{s} M_{is}c_{qis}r_{qi}\}, \nonumber
\end{equation}
from inequality~(\ref{eq: bound3}), we finally obtain
\begin{eqnarray*}
&&E[V(\QQ(k+1))|\QQ(k)=\QQ] - V(\QQ)
\le - \frac{N}{\epsilon} \left(\bar{R}(\pp^*) - \bar{R}(\tpp^*(k,\QQ)) \right) + B_1
    - B_2\sum_i Q_i.
\end{eqnarray*}

\subsection{Proof of Theorem~\ref{thm: revenue_converge}}
\label{sec: proof_revenue_converge}
The first inequality which shows that the online algorithm cannot do better than the offline optimal solution is too obvious, so we just ignore it here (proving it in a very rigorous way is also very easy, after defining the ``per-client revenue region'' in Subsection \ref{sec: multi_que_LB} and then using the fact that the average revenue vector $\blambda$ corresponding to our online algorithm falls inside that region,  according to inequality \eqref{eq: long-run spending} which is implied by stability). 
 
We now focus on the second inequality, i.e., the $O(\epsilon)$ convergence bound. From Lemma~\ref{lemma: bounded_drift2},
\begin{eqnarray*}
&&E\left[ \bar{R}(\pp^*) - \bar{R}(\tpp^*(k,\QQ(k))) \right]\\
&\le& \frac{\epsilon}{N}\cdot E\Bigg[B_1 - B_2\sum_i \QQ(k) + V(\QQ(k)) - E[V(\QQ(k+1))|\QQ(k)]\Bigg]\\
&\le& \frac{\epsilon}{N} \cdot (B_1 - E[V(\QQ(k))] - E[V(\QQ(k+1))]),
\end{eqnarray*}
Adding the terms for $0\le k \le K-1$ and dividing by $K$, we get
\begin{eqnarray*}
\frac{1}{K} \sum_{k=0}^{K-1} E\left[ \bar{R}(\pp^*) - \bar{R}(\tpp^*(k,\QQ(k))) \right]
&\le& \frac{\epsilon}{N} \left(B_1 - \frac{E[V(\QQ(K))]}{K} + \frac{V(\QQ(0))}{K} \right)\\
&\le&\frac{\epsilon}{N} \left(B_1 + \frac{V(\QQ(0))}{K} \right).
\end{eqnarray*}
Since $V(\QQ(0)) < \infty$, we get the following limit expression:
\begin{equation}
\lim_{K\rightarrow \infty} \frac{1}{K} \sum_{k=0}^{K-1} E\left[ \bar{R}(\pp^*) - \bar{R}(\tpp^*(k,\QQ(k))) \right]
\le \frac{ B_1\epsilon}{N}. \label{eq: rev_diff_ub}
\end{equation}
Finally, because
\begin{eqnarray*}
&&E\left[ N \bar{R}(\pp^*) - R(k) \right]
= E\left[E\left[ N\bar{R}(\pp^*) - R(k) | \QQ(k) \right]\right]
= N\cdot E\left[ \bar{R}(\pp^*) - \bar{R}(\tpp^*(k,\QQ(k))) \right],
\end{eqnarray*}
inequality~(\ref{eq: rev_diff_ub}) is equivalent to
$$
\lim_{K\rightarrow \infty} E\left[ \bar{R}(\pp^*) - \frac{1}{KN}\sum_{k=0}^{K-1} R(k) \right] \le \frac{B_1\epsilon}{N}.
$$

\subsection{Proof of Corollary~\ref{cor: revenue_converge_estimate}}
\label{sec: proof_revenue_converge_estimate}
Continuing from inequality~(\ref{eq: bound2}) in Appendix \ref{sec: proof_bounded_drift2} (the proof of Lemma~\ref{lemma: bounded_drift2}), we get
\begin{eqnarray}
&&E[V(\QQ(k+1))|\QQ(k)=\QQ] - V(\QQ) \nonumber\\
&\stackrel{(a)}{\le}&\!\!\!-\frac{1}{1+\Delta}N \sum_q\nu_q \sum_{i,s} \left(\frac{1}{\epsilon} - Q_i\right)[\tM^*(q,t,\QQ)]_{is}\hc_{qis}r_{qi} 
+ \frac{N}{\epsilon} \bar{R}(\tpp^*(k,\QQ)) + B_1 - \sum_i Q_i b_i \nonumber \\
&\stackrel{(b)}{\le}&\!\!\! -\frac{1}{1+\Delta} N\sum_q\nu_q \sum_{i,s} \left(\frac{1}{\epsilon} - Q_i\right)\sum_{M\in \cM_q} \pqM^* M_{is} \hc_{qis}r_{qi}
 + \frac{N}{\epsilon} \bar{R}(\tpp^*(k,\QQ)) + B_1 - \sum_i Q_i b_i   \nonumber
\end{eqnarray}
\begin{eqnarray}
&\stackrel{(c)}{\le}&\!\!\! -\frac{1-\Delta}{1+\Delta} N\sum_q\nu_q \sum_{i,s} \left(\frac{1}{\epsilon} - Q_i\right)\sum_{M\in \cM_q} \pqM^* M_{is} c_{qis}r_{qi}
+ \frac{N}{\epsilon} \bar{R}(\tpp^*(k,\QQ)) + B_1 - \sum_i Q_i b_i   \nonumber\\
&=&\!\!\! - \frac{N}{\epsilon} \left(\frac{1-\Delta}{1+\Delta}\bar{R}(\pp^*) - \bar{R}(\tpp^*(k,\QQ)) \right) + B_1 \nonumber\\
&&  
- \sum_i Q_i \cdot \left(b_i - \frac{1-\Delta}{1+\Delta} N \sum_q \nu_q \sum_{M\in\cM_q} \pqM^* \sum_{s} M_{is}c_{qis}r_{qi}\right).
\label{eq: bound3_ctr_est}
\end{eqnarray}
Here, inequalities (a) and (c) hold respectively because $\hcc \le \cc (1+\Delta)$ and $\hcc \ge \cc (1-\Delta)$, with the fact that all the coefficients in this summation are nonnegative. Inequality (b) holds because equation (\ref{eq: dyn_alg_choose_mat_estimate}) in the online algorithm with estimated click-through rates is equivalent to
\begin{eqnarray}
&\forall q,&\tpp_q^*(k,\QQ(k)) 
\in \arg \max_{\{\pqM,\atop M\in \cM_q\}}
    \sum_{M\in\cM_q} \pqM \sum_{i,s} M_{is}\hc_{qis}r_{qi}\left(\frac{1}{\epsilon}-Q_i(k)\right),
    \label{eq: dyn_max_prob_form2}
\end{eqnarray}
which means that evaluating the objective function in (\ref{eq: dyn_max_prob_form2}) with $\pp = \pp^*$ cannot achieve a larger value. Letting
\begin{equation}
B'_2\triangleq \!\min_{i} \left\{b_i  \!- \! \frac{1-\Delta}{1+\Delta} \cdot N\sum_q \nu_q  \! \!\sum_{M\in\cM_q} \! \! \pqM^* \sum_{s} M_{is}c_{qis}r_{qi}\right\}, \nonumber
\end{equation}
from inequality~(\ref{eq: bound3}), we finally obtain
\begin{equation*}
E[V(\QQ(k+1))|\QQ(k)=\QQ] - V(\QQ)
\le  - \frac{N}{\epsilon} \left(\frac{1-\Delta}{1+\Delta} \bar{R}(\pp^*) - \bar{R}(\tpp^*(k,\QQ)) \right) + B_1 - B'_2\sum_i Q_i.
\end{equation*}
Therefore, similarly as in the proof of Theorem~\ref{thm: revenue_converge}, we can finally show that
\begin{eqnarray*}
\lim_{K\rightarrow \infty} E\left[\frac{1}{KN}\sum_{k=0}^{K-1} R(k) \right]
&\ge&
\left(\frac{1-\Delta}{1+\Delta}\right)\cdot \bar{R}(\pp^*) - \frac{B_1\epsilon}{N}.
\end{eqnarray*}

\

\subsection{Proof of Lemma~\ref{lemma: bounded_drift_model2}}
\label{sec: proof_bounded_drift_model2}
By a similar approach as in the proof of Lemma \ref{lemma: bounded_drift2} (Appendix \ref{sec: proof_bounded_drift2}), we have \begin{eqnarray}
&&E[V(\QQ(k+1))|\QQ(k)=\QQ] - V(\QQ) \nonumber\\
&\le&\!\!\! -N\sum_q\nu_q \sum_{i,s} \left(Q_i + \frac{c_{qis}}{\epsilon}\right)[\tM^*(q,t,\QQ)]_{is} 
+ \frac{N}{\epsilon} \bar{J}(\tpp^*(k,\QQ)) + D_1 + \sum_i Q_i m_i \nonumber\\
&\stackrel{(a)}{\le}&\!\!\! - N\sum_q\nu_q \sum_{i,s} \left(Q_i + \frac{c_{qis}}{\epsilon}\right)\sum_{M\in \cM_q} \hpqM M_{is}
+ \frac{N}{\epsilon} \bar{J}(\tpp^*(k,\QQ)) + D_1 + \sum_i Q_i m_i   \nonumber\\
&=&\!\!\! - \frac{N}{\epsilon} \left(\bar{J}(\hpp) - \bar{J}(\tpp^*(k,\QQ)) \right) + D_1- \sum_i Q_i
\left(N\sum_q \nu_q \sum_{M\in\cM_q} \hpqM \sum_{s} M_{is} - m_i\right), \label{eq: bound3_model2}
\end{eqnarray}
where $D_1$ is an upper bound on $\frac{1}{2} \sum_i ( E[S_i^2(k,\QQ,\uu(k))] + E[\tm_i^2(k)])$ and defined as
\begin{eqnarray*}
D_1 \!\!\!&\triangleq&\!\!\! \frac{1}{2} \Big(N(N-1) L^2+NL
 + \sum_i \lceil m_i \rceil^2 (m_i- \lfloor m_i \rfloor) 
 + \lfloor m_i \rfloor^2 (1-m_i+\lfloor m_i \rfloor)
\Big).
\end{eqnarray*}
Note that inequality \eqref{eq: bound3_model2} has the same form as inequality \eqref{eq: bound3} in the proof of Lemma \ref{lemma: bounded_drift2}, except that the offline optimum $\pp^*$ is replaced by some $\hpp \in \mathcal{F}$. 
Letting
\begin{equation}
D_2\triangleq \min_{i} \{N\sum_q \nu_q \sum_{M\in\cM_q} \hpqM \sum_{s} M_{is}-m_i\}, \nonumber
\end{equation}
it is always possible to pick a $\hpp\in\mathcal{F}$ such that $D_2 >0$  (unless $\mathcal{F}$ is a degenerated set which has at most one element). We further bound the above inequality as  
\begin{eqnarray*}
&&E[V(\QQ(k+1))|\QQ(k)=\QQ] - V(\QQ)
\le \frac{D_3}{\epsilon}  + D_1
    - D_2 \sum_i Q_i.
\end{eqnarray*}
Here, 
$D_3 \triangleq N \cdot \max_{\pp\in \mathcal{F}_0}\avgJ(\pp)$ 
where $\mathcal{F}_0$ is defined in \eqref{eq: feasible_relaxed}. 
This concludes our proof.


\subsection{Short-Term Clients and Non-Stationary Query Arrivals}
\label{sec: mixed}
We focus on the click-through rate maximization problem, although a similar model and solution can be used for revenue maximization problem. 

First, consider how to include short-term clients in the system. Let us index long-term clients from $1$ to $n$, the $i^{\rm th}$ of which has an average impression requirement of $m_i$ per requirement cycle. There are further $\tn$ types of short-term clients indexed by $n+1 \le i \le n+\tn$. Each short-term client of type $i$ has a impression requirement of $l_i$ per contract term. Without loss of generality, we assume that the contract term of any short-term client is equal to one requirement cycle. In each requirement cycle $k$, there are $X_i(k)$ clients of type $i$ in the system, where $X_i(k)$ follows a stationary stochastic process with mean $x_i$ and $X_i(k)$ is known at the beginning of requirement cycle $k$. 


Correspondingly in an ad assignment matrix $M$, the first $n$ rows and the subsequent $\tn$ rows represent the $n$ long-term clients and the $\tn$ types of short-term clients, respectively. If short-term type $j$ is assigned to some webpage slot, one out of $X_j(k)$ clients of this type is chosen uniformly at random due to their homogeneity. 

Additionally, for a short-term client of type $i$, the algorithm is actually aimed to satisfy at least only $(1-\alpha_i) l_i$, where $\alpha_i \in [0,1]$ is called ``unfulfilled rate'' for clients of type $i$ and to be determined by the algorithm. A strictly convex and monotonically increasing function $\phi(\alpha_i) \in [0,\infty)$ is then introduced to measure the ``unhappiness'' of short-term clients about unfulfilled impression requirements, and deducted from the original objective function ``overall average click-through rate'' in \eqref{obj: max_click-rate} after scaled by some predetermined weight $w_i$ which reflects the importance of the new metric ``unfulfilled rate.''  

The second extension from the original model is to consider a more general query arrival pattern. We introduce a new time scale ``stationary-arrival period'' between the fast one ``time slot'' $t$ and the slow one ``requirement cycle'' $k$, namely one requirement cycle equals $H$ stationary-arrival periods (assuming that $N/H \in \mathcal{Z}^+$ and usually $N/H \gg 1$), and we assume that query arrivals with respect to each keyword $q$ form a stationary stochastic process with rate $\nu_q(h)$ within the $h^{\rm th}$ stationary-arrival period in one requirement cycle for all $1\le h \le H$. This is a more reasonable assumption for the query arrival pattern in the real Internet. For example, in one day, the query arrivals are stationary within each individual hour, non-stationary across different hours, and stationary in the same hour across different days. This corresponds to $H=24$, although setting a contract term (already assumed to be equal to one requirement cycle) as one day would only be a simplification for ease of exposition. Based on this example, in the following text we are going to use ``day'' and ``hour'' instead of ``requirement cycle'' and ``stationary-arrival period'' to better describe the basic ideas.   

In summary, the new optimization problem is formulated as
\begin{eqnarray*}
\max_{\{\pp(h),\forall h;~\bm{\alpha}\}}\!\!\!\!\!\!\!\!\!&& \frac{1}{H} \sum_q\sum_{h=1}^H \nu_q(h) \!\!\!\! \sum_{M\in \cM_q}\!\!\pqM(h)\!\!\!\!\!\! \sum_{1\le i \le n+\tn,s} \!\!\!\!\!\!\!\!M_{is} c_{qis} 
- \sum_{i=n+1}^{n+\tn} w_i \phi(\alpha_i)\nonumber
\end{eqnarray*}
subject to
\begin{eqnarray}
&&\frac{N}{H} \sum_q \sum_{h=1}^H \nu_q(h) \!\!\!\! \sum_{M\in \cM_q}\!\!\pqM(h) \sum_{s} M_{is}
\ge 
\left\{
\begin{array}{c@{,~~\forall~}c}
m_i & 1\le i \le n\\
(1-\alpha_i) l_i x_i& n+1 \le i \le n+\tn  
\end{array}
\right.
\nonumber
\end{eqnarray}
and
\begin{equation*}
0\leq \pqM(h) \leq 1,~\forall q, ~M \in \cM_q, ~1\le h \le H;\quad
\sum_{M\in \cM_q} \!\!\!\!\pqM(h) \leq 1,~\forall q, ~1\le h \le H.
\end{equation*}

The only modification in the online algorithm described in Subsection \label{sec: alg_perf_model2} is to add the following two steps specially for each type of short-term clients:
\begin{itemize}
\item At the beginning of the $k^{\rm th}$ day, update 
$$
\alpha^*_i(k) =\psi \left(\frac{ l_i X_i(k) \cdot Q_i(k)}{H w_i}\right),
$$
which corresponds to the target ``unfulfilled rate'' for each type of short-term clients in this day. Here, the function $\psi \triangleq \left[\frac{d\phi}{d\alpha}\right]^{-1}$.

\item At the end of the $k^{\rm th}$ day, ``credit queue'' $i$ maintained for type $i$ of short-term clients is updated as
\begin{eqnarray*}
Q_i(k+1) &=& Q_i(k) + (1-\alpha_i^*(k)) \cdot l_i X_i(k)
- S_i(k,\QQ(k),\qq(k)),
\end{eqnarray*}
where $S_i(k,\QQ(k),\qq(k))$ is defined in \eqref{eq: virtual_depart}.
\end{itemize} 

The conclusions and proofs about near-optimality of the objective value, queueing stability and upper bound on the expected queue length are similar as those shown for the original problem in Subsection \ref{sec: perf_model2} and hence omitted here. 

Note that the online algorithm is still ``oblivious'' to the query arrivals, when the arrival processes become non-stationary to some extent. This is an artifact of dual decomposition w.r.t. each hour $h$, in addition to a decomposition w.r.t. each keyword $q$ as we have seen before.


\end{document}